%
\documentclass[runningheads]{llncs}
\hyphenation{majorizes}
\usepackage{amssymb}
\usepackage{amstext}
\usepackage{amsmath}
\usepackage{thm-restate}
\usepackage{subcaption}
\usepackage{pgfplots}
\pgfplotsset{
    compat=1.6,
    table/search path={data},
}
\usetikzlibrary{matrix}
\usepgfplotslibrary{groupplots}

\pgfplotscreateplotcyclelist{mylist}{
    {red!60!black,mark=otimes*, mark options={fill=red!40}},    
    {red,mark=triangle*},
    {red,mark=square},
    {blue!60!black,mark=otimes*, mark options={fill=blue!40}},
    {blue,mark=triangle*},
    {blue,mark=square},
    {black}
}

\usepackage{graphicx}
\usepackage[title]{appendix}
\usepackage{textcomp}
\usepackage{xcolor}
\usepackage{algorithm}
\usepackage[noend]{algorithmic}
\usepackage{etoolbox}
\usepackage{mathtools}
\usepackage{environ}
\usepackage{xfrac}
\usepackage{booktabs}
\usepackage{multirow}
\usepackage{hyperref}


\hypersetup{
    colorlinks = true,
    linkcolor= red,
    citecolor= blue,
    urlcolor=blue
}

\usepackage{amsthm}


\DeclareMathOperator*{\argmin}{arg\,min}
\DeclareMathOperator*{\argmax}{arg\,max}
\DeclareMathOperator{\dom}{dom}

\usepackage[T1]{fontenc}
%

%
\usepackage{color}

%
\begin{document}
\title{Minimizing $\ell_2$ Norm of Flow Time by Starvation Mitigation}
%
%
\author{Tung-Wei Kuo\orcidID{0000-0002-2518-4462}}
\authorrunning{T.-W. Kuo}
\institute{
Department of Computer Science, National Chengchi University, Taipei, Taiwan
\email{twkuo@cs.nccu.edu.tw}}
\maketitle

\begin{abstract}
The assessment of a job's Quality of Service (QoS) often revolves around its flow time, 
also referred to as response time. This study delves into two fundamental objectives 
for scheduling jobs: the average flow time and the maximum flow time. 
While the Shortest Remaining Processing Time (SRPT) algorithm minimizes average flow time, 
it can result in job starvation, causing certain jobs to experience disproportionately long 
and unfair flow times. In contrast, the First-Come-First-Served (FCFS) algorithm minimizes the 
maximum flow time but may compromise the average flow time. 

To strike a balance between these two objectives, a common approach
is to minimize the $\ell_2$ norm of flow time. 
SRPT and FCFS are $O(n^{\frac{1}{2}})$-competitive for this problem, where $n$ is the number of jobs.
Prior to this work, no algorithm is known to achieve a competitive ratio better than 
SRPT and FCFS. In this paper, we use FCFS to mitigate the starvation caused by SRPT.  
Given a good estimate of $n$, we prove that this approach achieves a much better competitive 
ratio of $O(n^{\frac{1}{3}})$.  Our results provide the first theoretical evidence that mitigating 
starvation in SRPT leads to a provable improvement in scheduling performance.

\keywords{$\ell_2$ Norm of Flow Time  \and SRPT \and FCFS.}
\end{abstract}

\section{Introduction}\label{sec: intro}
\begin{figure}[t]
\begin{center} 
\includegraphics[width= 8 cm]{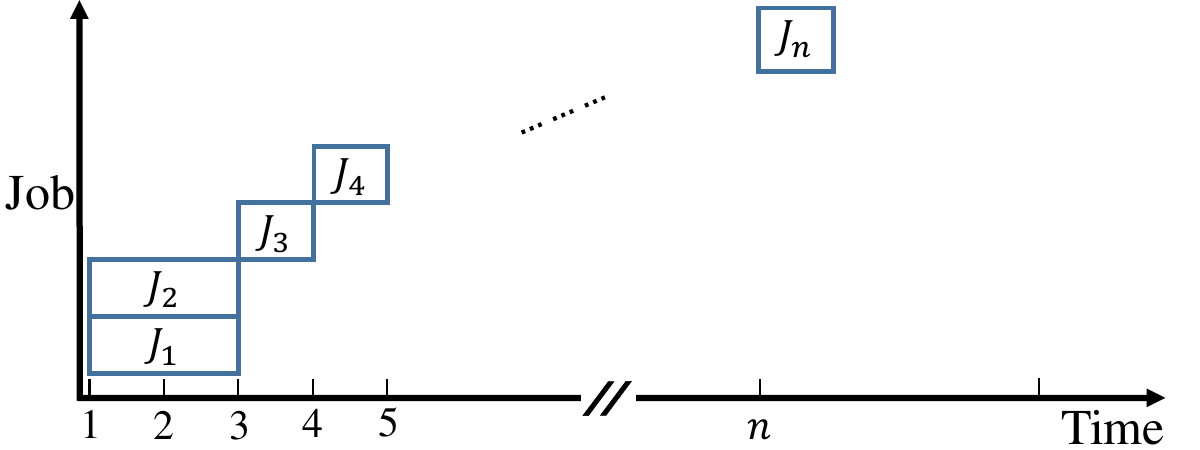} 
\caption{A bad instance for SRPT.}
\label{fig: ex1}
\end{center}
\end{figure}

\begin{figure}[h!]
\begin{center} 
\includegraphics[width= 8 cm]{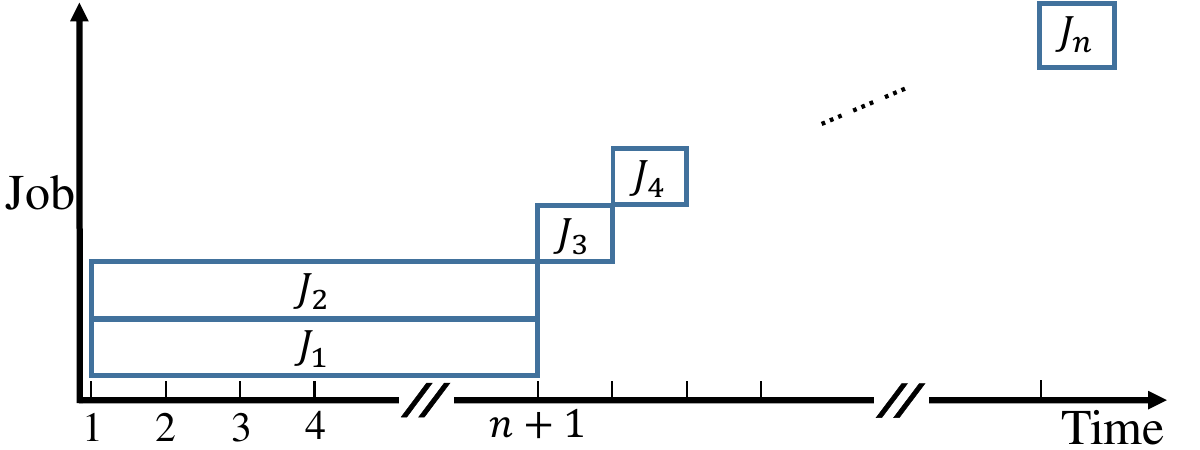} 
\caption{A bad instance for FCFS.}
\label{fig: ex2}
\end{center}
\end{figure}
In a server system, the Quality of Service (QoS) of a job is normally measured by 
its response time, which is defined as the amount of time between 
job release and job completion. In job scheduling, the response time is also called the flow time. 
Specifically, for any schedule $\mathcal{S}$, let $c_i(\mathcal{S})$ be the 
completion time of job $J_i$ under $\mathcal{S}$ and $r_i$ be the release time of $J_i$. 
Then the flow time of a job $J_i$ under $\mathcal{S}$, denoted by $f_i(\mathcal{S})$, 
is defined as
\[
f_i(\mathcal{S}) = c_i(\mathcal{S}) - r_i.
\]
Two natural QoS objectives for a server system are the average, 
or equivalently, $\ell_1$ norm, 
of job flow times, and the maximum, or $\ell_{\infty}$ norm, of job flow times. 
It is well-known that Shortest Remaining Processing Time (SRPT), which minimizes average flow time, 
may cause some jobs to have long and inequitable flow times (i.e., job starvation).
In contrast, First-Come-First-Served (FCFS), 
which minimizes the maximum flow time~\cite{10.5555/314613.314715}, 
may deteriorate the average flow time. 

\begin{example}
In Fig.~\ref{fig: ex1}, jobs $J_1$ and $J_2$ are released 
at time 1, and both have processing times of 2. All the other jobs $J_i$ 
are released at time $i$ and have unit processing times. Assume that under 
SRPT, $J_1$ is executed first. Thus, $J_2$ is starving in the sense that 
it has to wait until all the other jobs are completed. Under FCFS, 
however, the flow time of every job is only $O(1)$. 
\end{example}

\begin{example} Consider Fig.~\ref{fig: ex2}.
Compared with Fig.~\ref{fig: ex1}, the processing times of $J_1$ and 
$J_2$ in Fig.~\ref{fig: ex2} are increased to $n$. Under FCFS, 
the flow time of every job is $\Theta(n)$, and thus FCFS's average 
flow time is $\Theta(n)$. Under SRPT, however, the average flow time is only $O(1)$. 
\end{example}

The standard approach to balance the $\ell_1$ norm and the $\ell_{\infty}$ norm of flow time is 
minimizing the $\ell_2$ norm of flow time~\cite{doi:10.1137/090772228,10.1145/2755573.2755581}. 
Specifically, the $\ell_2$ norm of flow time 
under schedule $\mathcal{S}$ is 
\[
\sqrt{f_1(\mathcal{S})^2 + f_2(\mathcal{S})^2 + \cdots + f_n(\mathcal{S})^2}, 
\]
where $n$ is the number of jobs.
In this paper, we aim to minimize the $\ell_2$ norm of flow time online 
on a single machine. 

For the online problem of minimizing the $\ell_2$ norm of flow time, 
only standard algorithms, including SRPT, 
Shortest Job First (SJF), Shortest Elapsed Time First (SETF),
and Round Robin (RR), have been analyzed, 
assuming that the machine is gifted extra speed~\cite{doi:10.1137/090772228,10.1145/2755573.2755581}. 
However, SRPT, SJF, and SETF may cause job starvation, 
and FCFS and RR may deteriorate 
average flow time. 
These drawbacks worsen the $\ell_2$ norm of flow time (see Appendix~\ref{sec: cr}). 
For instance, the examples illustrated in Figs.~\ref{fig: ex1} and \ref{fig: ex2} 
demonstrate that FCFS and SRPT are $\Omega(n^{1/2})$-competitive  
for minimizing the $\ell_2$ norm of flow time.

In this paper, we use FCFS to mitigate the starvation caused by SRPT. 
Our algorithm follows SRPT initially. 
If a job has been in the system for too long
and its remaining processing time is small enough, then the job 
becomes a starving job. 
Specifically, a job $J_i$ becomes starving at time $t$ if 
\[
\frac{t-r_i}{\text{remaining processing time of } J_i} \geq \theta,
\]
where $\theta$ is called the \textit{starvation threshold}. 
Whenever some jobs are starving, 
the algorithm processes the job that becomes starving first. 
We use $BAL(\theta)$ to denote the algorithm when 
the starvation threshold is set to $\theta$. 
For example, $BAL(0)$ is equivalent to FCFS and $BAL(\infty)$ 
is equivalent to SRPT.

\subsection{Our Result}
Our analysis on $BAL(\theta)$'s competitive ratio suggests that $\theta = n^{\frac{2}{3}}$ 
is the best starvation threshold.
Among all the aforementioned standard scheduling algorithms, 
SRPT and FCFS achieve the best known competitive ratio of $O(n^{\frac{1}{2}})$ for 
minimizing the $\ell_2$ norm of flow time (see Appendix~\ref{sec: cr}). 
We prove that $BAL(n^{\frac{2}{3}})$, a combination of SRPT and FCFS, achieves a significantly 
better competitive ratio of $O(n^{\frac{1}{3}})$. 

In practice, we can only estimate $n$. 
For example, usually job scheduling is only critical during peak hours,  
and normally the job arrival process is modeled as a Poisson process. 
Thus, the product of the job arrival 
rate and the duration of peak hours would be a good estimate of $n$. 
Nonetheless, in this study, we do not make any assumptions about the 
job arrival process or the estimation method. 
Instead, we assume that an estimate $\tilde{n}$ of $n$ with bounded error is given. 
The following theorem gives the competitive ratio of $BAL(\tilde{n}^{\frac{2}{3}})$. 
Specifically, if $\tilde{n} = \Theta(n)$, then $BAL(\tilde{n}^{\frac{2}{3}})$ is 
$O(n^{\frac{1}{3}})$-competitive.

\begin{restatable}{theorem}{beat}\label{coro: beat}
Let $\tilde{n}$ be an estimate of $n$ such that 
$\beta n \leq \tilde{n} \leq \alpha n$ for 
some $\frac{1}{n} \leq \beta \leq 1$ and $\alpha \geq 1$. 
Then the competitive ratio of $BAL(\tilde{n}^{\frac{2}{3}})$ 
for minimizing the $\ell_2$ norm of flow time is 
$O\left(\tilde{n}^{\frac{1}{3}} + n^{\frac{1}{2}}/\tilde{n}^{\frac{1}{6}}\right)
= O\left(
n^{\frac{1}{3}} 
\left(
\alpha^{\frac{1}{3}} + \beta^{\frac{-1}{6}}
\right)
\right)$.
\end{restatable} 

Observe that if the predicted number of jobs $\tilde{n}$ overestimates the true number $n$, 
$BAL(\tilde{n}^{\frac{2}{3}})$ achieves an $O(\tilde{n}^{\frac{1}{3}})$-competitive ratio, which corresponds to the 
usual competitive ratio of the algorithm, but expressed in terms of the predicted number of jobs 
rather than the actual count.
Conversely, if the prediction underestimates the true count, the competitive ratio is $O(n^{\frac{1}{2}}/\tilde{n}^{\frac{1}{6}})$,
which is no worse than that of standard algorithms without prediction.

\subsection{Related Work}\label{subsec: related}
In their seminal work, Bansal and Pruhs introduced 
the online problem of minimizing the $\ell_p$ norm of flow 
time~\cite{doi:10.1137/090772228}. They proved that 
SRPT and SJF are $(1+\epsilon)$-speed $O(\frac{1}{\epsilon})$-competitive 
and that SETF is $(1+\epsilon)$-speed $O(\frac{1}{\epsilon^{2+2/p}})$-competitive.
The results have been extended to all symmetric norms of 
flow time~\cite{golovin_et_al:LIPIcs:2008:1753} and identical 
machines~\cite{10.1145/1007352.1007411,10.5555/2133036.2133046}.  
Moreover, Im et al. showed that RR is $O(1)$-speed $O(1)$-competitive 
for the $\ell_2$ norm of flow time~\cite{10.1145/2755573.2755581}.
In~\cite{angelopoulos2019primal,doi:10.1137/120902288}, 
more general objective functions were considered. 
Specifically, for a job $J_i$ with flow time $f_i$, 
a cost $g_i(f_i)$ is incurred. The only restriction on $g_i$ is 
that $g_i$ must be non-decreasing. 
For this general cost minimization problem, there are $O(1)$-speed 
$O(1)$-competitive 
online algorithms~\cite{angelopoulos2019primal,doi:10.1137/120902288}. 

For the offline setting, Bansal and Pruhs first proposed an $O((\log \log P)^{1/p})$-approximation 
algorithm using linear programming rounding, where $P$ is the ratio of the maximum to minimum job 
size~\cite{doi:10.1137/130911317}. 
Subsequently, several $O(1)$-approximation algorithms were developed, 
leveraging linear programming rounding, dynamic programming, or a combination of both~\cite{doi:10.1137/1.9781611977936.7,doi:10.1137/19M1244512,doi:10.1137/18M1202451}.

\section{Definitions and the Algorithm}\label{sec: thealgo}
We consider $n$ jobs, $J_1, J_2, \cdots, J_n$, 
and one machine. Each job $J_i$ has a processing time $p_i$ and a release time $r_i$. 
As in~\cite{doi:10.1137/130911317,doi:10.1137/18M1202451}, 
we assume that $p_i$ and $r_i$ are integers. 
We allow job preemption and consider clairvoyant scheduling.
Define $F(\mathcal{S}) = \sum_{i=1}^{n}{f_i(\mathcal{S})^2}$. 
The goal is to compute online a schedule $\mathcal{S}$ that minimizes 
the $\ell_2$ norm of flow time, i.e., $\sqrt{F(\mathcal{S})}$. 

For every $t \in \mathbb{N}$\footnote{In this paper, 
we assume $\mathbb{N}$ contains 0.}, the \textbf{time slot} $[t]$ 
is defined as the time interval between time $t$ and time $t+1$. 
Thus, we can divide time into time slots $[0], [1], [2], \cdots$.
We can view each job $J_i$ as a chain of tasks 
$J_{i,1}, J_{i,2}, \cdots, J_{i,p_i}$, where each task has a unit 
processing time. Because all the processing times and release times are integers, 
by a simple exchange argument, we can assume that under an optimal schedule,
the machine never executes more than one task in time slot $[t]$ for any 
$t \in \mathbb{N}$ (i.e., the machine is either idle or executing 
the same task throughout the entire time slot $[t]$). 
Thus, for every time slot $[t]$, a schedule assigns a (possibly empty) task to be executed in $[t]$.
If a task $J_{i,k}$ is executed in time slot $[t]$ under schedule $\mathcal{S}$, 
then its completion time, denoted by $c_{i,k}(\mathcal{S})$, is $t+1$. 
Throughout this paper, we use $\mathcal{BAL}(\theta), \mathcal{SRPT}, \mathcal{FCFS}, \mathcal{RR}$, 
and $\mathcal{OPT}$ to denote the schedule obtained by $BAL(\theta)$, 
SRPT, FCFS, RR, and an optimal schedule, respectively. 
If $\theta$ is clear from the context, we simply write $\mathcal{BAL}$ instead of $\mathcal{BAL}(\theta)$.

A job is said to be \textbf{active} at time $t$ under schedule $\mathcal{S}$ 
if it is released by time $t$ but has not yet been completed by time $t$ 
under $\mathcal{S}$. We use $A(t, \mathcal{S})$ to denote the index set of the 
active jobs under $\mathcal{S}$ at time $t$. 
In this paper, for any $a, b \in \mathbb{N}$ with $a \leq b$, $[a,b]$ is defined as 
$\{i | i \in \mathbb{N}, a \leq i \leq b\}$. If $a > b$, then $[a,b] = \varnothing$. 
Moreover, we define a \textbf{map} as a non-negative function with a finite domain.   
In this paper, we use the terms ``remaining processing time'' and ``number of remaining tasks'' 
interchangeably. 
We then introduce the most important map in this paper. 
\begin{definition}
For any schedule $\mathcal{S}$, any time $t$, and any $i \in A(t, \mathcal{S})$, 
define $q_{t, \mathcal{S}}(i)$ as the number of remaining tasks of $J_i$ 
at time $t$ under $\mathcal{S}$; if $i \in [1,n] \setminus A(t, \mathcal{S})$, 
define $q_{t, \mathcal{S}}(i) = 0$.
\end{definition}
%
%

\begin{example}
Consider the schedule obtained by SRPT for the instance shown 
in Fig.~\ref{fig: exq}. 
Assume that SRPT first executes $J_1$. Then, at time 3, the number of remaining 
tasks of $J_1$ and $J_2$ are 4 and 6, respectively. 
Thus, $q_{3, \mathcal{SRPT}}(1) = 4$ and $q_{3, \mathcal{SRPT}}(2) = 6$. 
Because jobs $J_3, J_4, J_5$ are not active 
at time 3, we have $q_{3, \mathcal{SRPT}}(i) = 0$ for all $3 \leq i \leq 5$. 
\end{example}

\begin{figure}[t]
\begin{center} 
\includegraphics[width=8cm]{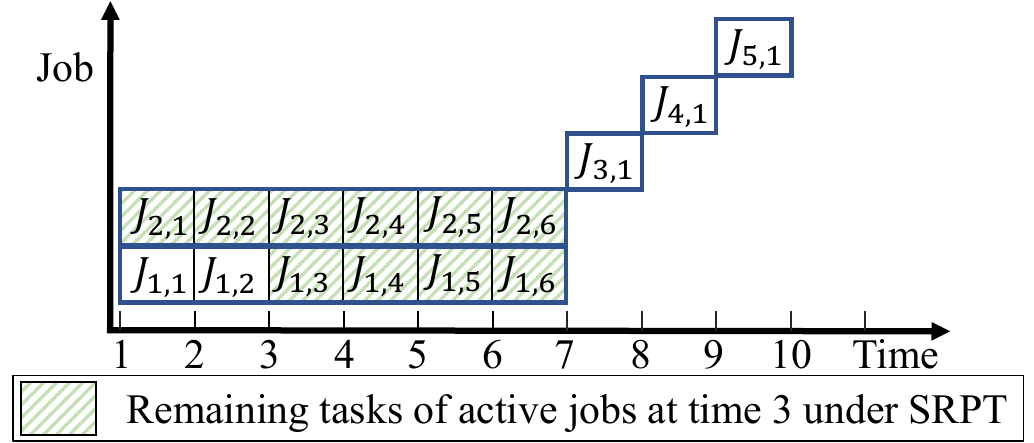} 
\caption{An example of $q_{3, \mathcal{SRPT}}(i)$.}
\label{fig: exq}
\end{center}
\end{figure}

\subsection{The Algorithm}
Next, we formalize the algorithm $BAL(\theta)$ introduced in Section~\ref{sec: intro}. 
For brevity, define $q_t(i)$ as $q_{t, \mathcal{BAL}}(i)$. In other words, 
$q_t(i)$ is the number of remaining tasks of $J_i$ at time $t$ under $BAL(\theta)$. 
$BAL(\theta)$ categorizes active jobs into two types, starving and normal, according to the 
given starvation threshold $\theta$. 
Every job is \textbf{normal} initially, and tasks of normal jobs are called normal tasks.
An active job $J_i$ is said to be \textbf{starving} at time $t$ if 
\begin{equation}\label{eq: stthreshold}
\frac{t-r_i}{q_t(i)} \geq \theta.
\end{equation}
Moreover, once $J_i$ becomes starving, every remaining task of $J_i$ is said to 
be starving as well. For any starving job $J_i$, 
define $t_i$ as the time at which $J_i$ becomes starving (i.e.,
$t_i$ is the smallest $t \in \mathbb{N}$ that satisfies Eq.~\eqref{eq: stthreshold}). 

When there are starving jobs, $BAL(\theta)$ executes the starving job that has the smallest $t_i$. 
In other words, $BAL(\theta)$ executes the job that becomes starving first. 
In Algorithm~\ref{algo}, $ST$ denotes the set of the starving job indices.

\begin{algorithm}[t]
\begin{small}
\caption{$BAL(\theta)$}
\label{algo}
\begin{algorithmic}[1]
\STATE{$ST \gets \varnothing$}
\FOR {$t \gets 0$ to $\infty$}
	\FOR{every active job $J_i$}
		\IF{$i \notin ST \land \frac{t-r_i}{q_t(i)} \geq \theta$}
            \STATE{$t_i \gets t$}            
            \STATE{Add $i$ to $ST$}
		\ENDIF
	\ENDFOR
	\IF{$ST \neq \varnothing$}
        \STATE{$i^* = \argmin_{i \in ST}{t_i}$}        	
    	\STATE{Execute job $J_{i^*}$ in time slot $[t]$}
        \STATE{Remove $i^*$ from $ST$ if $J_{i^*}$ is completed at time $t+1$}    	
	\ELSIF {$M_{SRPT}(t) \setminus M_{BAL}(t-1) \neq \varnothing$}
        \STATE{Among all the tasks in $M_{SRPT}(t) \setminus M_{BAL}(t-1)$, 
        execute the one that has the smallest completion time under SRPT}
    \ENDIF
\ENDFOR
\end{algorithmic}
\end{small}
\end{algorithm}  

When there are no starving jobs, $BAL(\theta)$ compensates normal jobs for their lost execution time
(as compared to SRPT). More precisely, during the execution of a starving job by $BAL(\theta)$, 
there is an opportunity for a normal job to be executed under SRPT, 
resulting in a delay compared to SRPT for this particular job. 
To address this delay, when there are no starving jobs, 
$BAL(\theta)$ gives priority to these delayed normal jobs. 
In particular, among all the delayed normal tasks, $BAL(\theta)$ executes the task 
that has the earliest completion time under SRPT, as it has experienced the longest delay. 
Specifically, let $M_{SRPT}(t)$ and $M_{BAL}(t)$ 
be the sets of tasks that are executed in time slots $[0], [1], [2], \cdots, [t]$
under SRPT and $BAL(\theta)$, respectively. Define $M_{BAL}(-1) = \varnothing$.
At time $t$, if there are no starving jobs, then among the tasks in $M_{SRPT}(t) \setminus M_{BAL}(t-1)$, 
$BAL(\theta)$ executes the one that is completed the earliest under SRPT.\footnote{All 
the variables associated with $BAL(\theta)$ depend on $\theta$ (e.g., $t_i$, $ST$, and 
$M_{BAL}(t)$). For simplicity, we do not explicitly show their dependency on $\theta$ in our notation.}   


\section{Searching for the Best $\theta$: Proof of Theorem~\ref{coro: beat}}\label{appendix: thrm: algoN}
To find the best starvation threshold $\theta$, we partition the lifetime of $J_i$ into two periods, 
normal and starving. Specifically, if a job never becomes starving under $BAL(\theta)$, 
it is termed a \textbf{Finished-as-Normal (FaN) job}; 
otherwise, it is termed a \textbf{Finished-as-Starving (FaS) job}. 
We further extend the definition of $t_i$ to FaN jobs as follows: 
If $J_i$ is an FaN job, $t_i$ is defined as its completion time 
(i.e., $t_i = c_i(\mathcal{BAL})$). 
For any job $J_i$, we then define 
\[
norm_i(\theta) = t_i - r_i
\]
and
\[
starv_i(\theta) = c_i(\mathcal{BAL}) - t_i.
\]
Thus, $f_i(\mathcal{BAL}) = norm_i(\theta)+starv_i(\theta)$
and 
\[ 
F(\mathcal{BAL}) = \Theta\left(\sum_{i=1}^{n}{norm_i(\theta)^2}+\sum_{i=1}^{n}{starv_i(\theta)^2}\right).
\]

To minimize the $\ell_2$ norm of flow time, 
we balance $\sum_{i=1}^{n}{norm_i(\theta)^2}$ 
and $\sum_{i=1}^{n}{starv_i(\theta)^2}$. 
Intuitively, $norm_i(\theta)$ decreases as the starvation threshold $\theta$ decreases. 
In contrast, $starv_i(\theta)$ decreases as the starvation threshold $\theta$ increases. 
The following two theorems, whose proofs are postponed to Sections~\ref{sec: algoN} and \ref{sec: algoS}, 
relate $\sum_{i=1}^{n}{norm_i(\theta)^2}$ and $\sum_{i=1}^{n}{starv_i(\theta)^2}$ to $F(\mathcal{OPT})$. 
\begin{restatable}{theorem}{mainN}\label{thrm: algoN}
For any starvation threshold $\theta \geq 0$, 
$\displaystyle \sum_{i=1}^{n}{norm_i(\theta)^2} = O(\theta)F(\mathcal{OPT})$.
\end{restatable}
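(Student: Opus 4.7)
The plan is to exploit two ingredients: the not-starving inequality $(t - r_i) < \theta \cdot q_t(i)$ that holds throughout $J_i$'s normal phase $t \in [r_i, t_i - 1]$, and the work-conservation of $BAL(\theta)$. Starting from the elementary identity $k^2 = 2\sum_{s=0}^{k-1} s + k$ applied to $k = norm_i(\theta)$ with $s = t - r_i$, I get
\[
norm_i(\theta)^2 = 2 \sum_{t=r_i}^{t_i - 1} (t - r_i) + norm_i(\theta) < 2\theta \sum_{t = r_i}^{t_i - 1} q_t(i) + norm_i(\theta),
\]
where the strict inequality applies the not-starving bound termwise. Summing over $i$ and swapping the order of summation yields
\[
\sum_{i=1}^{n} norm_i(\theta)^2 < 2\theta \sum_t Q(t) + \sum_{i=1}^{n} norm_i(\theta),
\]
where $Q(t) := \sum_{i : r_i \le t < t_i} q_t(i)$ is the total remaining work of the normal-phase active jobs at time $t$ under $\mathcal{BAL}$.

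Everything then reduces to showing $\sum_t Q(t) = O(F(\mathcal{OPT}))$. This also handles the linear term: each index $i$ in normal phase at $t$ contributes $q_t(i) \ge 1$ to $Q(t)$, so $|\{i : r_i \le t < t_i\}| \le Q(t)$, and therefore $\sum_{i} norm_i(\theta) = \sum_t |\{i : r_i \le t < t_i\}| \le \sum_t Q(t)$.

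For the main bound, observe $Q(t) \le \sum_i q_t(i)$, the total remaining work under $\mathcal{BAL}$. The key technical step is that $BAL(\theta)$ is work-conserving: whenever an active job exists, either $ST \neq \varnothing$ (and a starving task runs), or else a short induction on $t$ comparing $|M_{BAL}(t-1)|$ with $|M_{SRPT}(t)|$ forces $M_{SRPT}(t) \setminus M_{BAL}(t-1) \neq \varnothing$. Since $\mathcal{OPT}$ is also work-conserving, and the total remaining work of any work-conserving single-machine schedule obeys the same release-driven recurrence $W(t+1) = W(t) + (\text{releases at } t+1) - \mathbf{1}[W(t) > 0]$, we get $\sum_i q_t(i) = \sum_i q_{t, \mathcal{OPT}}(i)$ for every $t$. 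The standard identity
\[
\sum_t \sum_i q_{t, \mathcal{OPT}}(i) = \sum_i \sum_{k=1}^{p_i} \bigl(c_{i,k}(\mathcal{OPT}) - r_i\bigr) \le \sum_i p_i\, f_i(\mathcal{OPT}),
\]
combined with Cauchy--Schwarz and $p_i \le f_i(\mathcal{OPT})$, then gives $\sum_t Q(t) \le F(\mathcal{OPT})$.

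Putting everything together, $\sum_{i=1}^n norm_i(\theta)^2 \le (2\theta + 1) F(\mathcal{OPT}) = O(\theta) F(\mathcal{OPT})$ (the additive $1$ is absorbed for $\theta \ge 1$, and the bound is trivial for smaller $\theta$). The main obstacle will be rigorously establishing the work-conservation of $BAL$ and the resulting identification of the total remaining works under $\mathcal{BAL}$ and $\mathcal{OPT}$; once those are in place, the remaining manipulations are direct.
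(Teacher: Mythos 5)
Your proof is correct and follows essentially the same route as the paper's: both hinge on the non-starving inequality $t - r_i < \theta\, q_t(i)$ for $t \in [r_i, t_i-1]$ and on the bound $\sum_{t} q(t) \le F(\mathcal{OPT})$, which is exactly the content of the paper's Lemma~\ref{prop: LBQ}, and which you simply re-derive inline rather than cite. The only presentational differences are minor: the paper absorbs lower-order terms by splitting off the set $\{i : norm_i(\theta) = 1\}$, whereas you absorb $\sum_i norm_i(\theta)$ directly into $\sum_t Q(t)$; Cauchy--Schwarz is unnecessary in your last step since $p_i \le f_i(\mathcal{OPT})$ alone gives $\sum_i p_i f_i(\mathcal{OPT}) \le \sum_i f_i(\mathcal{OPT})^2 = F(\mathcal{OPT})$; and the work-conservation of $BAL(\theta)$ that you rightly flag as needing verification is in fact also taken for granted, unproven, in the paper (a short argument suffices: if $BAL$ were idle at time $t$ with an active job and $ST = \varnothing$, then $M_{SRPT}(t) \subseteq M_{BAL}(t-1)$ would force $|M_{SRPT}(t)| \le |M_{BAL}(t-1)| = |M_{SRPT}(t-1)|$, contradicting that SRPT is busy at $t$).
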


\begin{restatable}{theorem}{mainS}\label{thrm: algoS}
For any starvation threshold $\theta > 0$, 
$\displaystyle \sum_{i=1}^{n}{starv_i(\theta)^2} 
= O\left(\frac{n}{\sqrt{\theta}}\right)F(\mathcal{OPT})$.
 
\end{restatable}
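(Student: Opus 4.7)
The plan is to bound each FaS job's starving extension using the starvation condition combined with Theorem~\ref{thrm: algoN}, to view the starving jobs as a G/G/1 FIFO queue, and to close the argument by a work-conservation comparison against $\mathcal{OPT}$.

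First I would apply the starvation rule to bound the starving size $s_j := q_{t_{i_j}}(i_j) \leq norm_{i_j}(\theta)/\theta$. Squaring, summing over FaS jobs, and plugging in Theorem~\ref{thrm: algoN} gives $\sum_j s_j^2 = O(F(\mathcal{OPT})/\theta)$. Because $s_j \geq 1$ forces $norm_{i_j}(\theta) \geq \theta$, the same inequality also implies $k\theta^2 \leq \sum_i norm_i(\theta)^2 = O(\theta) F(\mathcal{OPT})$, i.e., the number of FaS jobs satisfies $k = O(F(\mathcal{OPT})/\theta)$, and Cauchy-Schwarz then controls the total starving work $T := \sum_j s_j$.

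Next, because $BAL(\theta)$ processes starving jobs in FIFO order of $t_i$, each $starv_{i_j}(\theta)$ equals the sojourn time of the $j$-th arrival in a G/G/1 FIFO queue with arrival epochs $\{t_{i_j}\}$ and service requirements $\{s_j\}$; in particular $\max_j starv_{i_j}(\theta) \leq T$. The starving backlog $R(t)$ of this queue is bounded by the total work-in-system $W(t)$, which (by work-conservation of both $BAL(\theta)$ and $\mathcal{OPT}$) coincides under the two schedules. Cauchy-Schwarz on $\sum_i p_i f_i(\mathcal{OPT})$ combined with $\sum_i p_i^2 \leq F(\mathcal{OPT})$ shows that $\int W(t)\,dt \leq F(\mathcal{OPT})$, and since each active starving job carries at least one unit of starving work, $A(t) \leq R(t)$, yielding the aggregate bound $\sum_j starv_{i_j}(\theta) = \int A(t)\,dt \leq F(\mathcal{OPT})$.

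Finally I would combine these ingredients via a layer-cake decomposition
\[
\sum_j starv_{i_j}(\theta)^2 \;=\; 2\int_0^T y\,\bigl|\{j : starv_{i_j}(\theta) \geq y\}\bigr|\,dy,
\]
applying the counting bound $|\{j : starv_{i_j}(\theta) \geq y\}| \leq k = O(F(\mathcal{OPT})/\theta)$ for small $y$ and the Markov-type tail $|\{j : starv_{i_j}(\theta) \geq y\}| \leq F(\mathcal{OPT})/y$ for large $y$, split at their crossover. The main obstacle is this last step: the naive combination $\sum starv^2 \leq T\cdot \sum starv \leq T F(\mathcal{OPT})$ is super-linear in $F(\mathcal{OPT})$, so the layer-cake argument must genuinely interleave the three bounds---on $k$, on $\sum_j starv_{i_j}(\theta)$, and on $T$---to extract the $\sqrt{\theta}$ (rather than $\theta$) in the denominator. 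Calibrating the split so that the factor $n/\sqrt{\theta}$ emerges uniformly over all regimes of $\theta$ and $F(\mathcal{OPT})$, especially when $F(\mathcal{OPT})$ is large, is the delicate piece I expect to require the bulk of the proof's effort.
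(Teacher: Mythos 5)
Your preliminary bounds are all correct: $q_{t_i}(i) \le norm_i(\theta)/\theta$, hence $\sum_{\text{FaS } i}{q_{t_i}(i)^2} = O(F(\mathcal{OPT})/\theta)$, the count $k$ of FaS jobs satisfies $k = O(F(\mathcal{OPT})/\theta)$, and integrating the starving backlog against Lemma~\ref{prop: LBQ} gives $\sum_i starv_i(\theta) = O(F(\mathcal{OPT}))$. The FIFO observation is also correct (and is used by the paper in the weaker form $starv_i(\theta) \le u(t_i)$). But the layer-cake step you flag as delicate is in fact unrecoverable with the inventory you have assembled: the tail bound $|\{i : starv_i(\theta) \ge y\}| \le O(F(\mathcal{OPT}))/y$ is linear in $F(\mathcal{OPT})$, and your only control on the range of integration is $\max_i starv_i(\theta) \le T = O(F(\mathcal{OPT})/\theta)$, which is also linear in $F(\mathcal{OPT})$. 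Any split of $\int_0^{T}{2y\min(k, F(\mathcal{OPT})/y)\,dy}$ then yields a bound no better than $O(F(\mathcal{OPT})\cdot T) = O(F(\mathcal{OPT})^2/\theta)$, which is superlinear in $F(\mathcal{OPT})$ and cannot match $O(n F(\mathcal{OPT})/\sqrt{\theta})$ when $F(\mathcal{OPT})$ is large. The issue is not calibration; it is that every quantity you control grows linearly in $F(\mathcal{OPT})$, so the square necessarily costs you an extra factor.

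What the paper supplies, and what is genuinely missing from your plan, is a sublinear-in-$F(\mathcal{OPT})$ bound on the maximum starving backlog $u(t^*)$. The paper observes $starv_i(\theta) \le u(t_i) \le u(t^*)$ (so $\sum_i starv_i(\theta)^2 = O(n\,u(t^*)^2)$) and then proves the key inequality $F(\mathcal{OPT}) = \Omega(\sqrt{\theta}\,u(t^*)^2)$, i.e., $u(t^*) = O(F(\mathcal{OPT})^{1/2}\theta^{-1/4})$. This requires two lower bounds on $F(\mathcal{OPT})$---one of the form $\theta\sum_i m(i)^2$ coming from the starvation rule, and a ``triangle-sum'' bound $\sum_i(\sum_{h\le i}m(h))^2$ coming from the fact that the optimum must clear the remaining work at time $t^*$---combined via AM--GM and Cauchy--Schwarz. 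The common map $m$ is produced by a majorization argument (SRPT majorizes every work-conserving schedule at every time, Lemma~\ref{lemma: FRFt}), which is needed to transfer the starving sizes $q_{t_i}(i)$, which live at different times, to the remaining-work profile of $\mathcal{OPT}$ at the single time $t^*$. None of this appears in your plan; without it, no rearrangement of your three bounds produces the required $\sqrt{\theta}$ in the denominator with a single power of $F(\mathcal{OPT})$.
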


Solving $\theta = \frac{n}{\sqrt{\theta}}$ yields $\theta = n^{\frac{2}{3}}$.
Thus, Theorems~\ref{thrm: algoN}~and~\ref{thrm: algoS} suggest that, 
to minimize the $\ell_2$ norm of flow time, 
the best starvation threshold is $\theta = n^{\frac{2}{3}}$, leading to a competitive ratio of 
$O(n^{\frac{1}{3}})$. In addition, Theorem~\ref{coro: beat} is a direct result of
Theorems~\ref{thrm: algoN}~and~\ref{thrm: algoS}. We give the proof of 
Theorem~\ref{coro: beat} in Appendix~\ref{appendix: coro: beat}.
\section{Proof of Theorem~\ref{thrm: algoN}}\label{sec: algoN}
To minimize the $\ell_2$ norm of flow time, we should not waste any time slot. Specifically, at every time 
$t \in \mathbb{N}$, if some active jobs are not completed, then one of them must 
be executed in time slot $[t]$. A schedule that satisfies the above property is 
called a \textbf{work-conserving} schedule. 
Clearly, for any time $t$, all work-conserving schedules have the same total number of remaining tasks 
over all active jobs.  
Thus, we define $q(t)$ as the total number of remaining tasks over all active jobs at time $t$ under any 
work-conserving schedule. For example, consider the instance shown in Fig.~\ref{fig: exq} at time $t = 3$. 
We then have $q(t) = 10$.
The following lemma gives the relationship between $q(t)$ and $F(\mathcal{OPT})$.

\begin{lemma}\label{prop: LBQ} 
For any work-conserving schedule $\mathcal{S}$,
$\displaystyle 
F(\mathcal{OPT}) 
\geq \sum_{t \in \mathbb{N}}{q(t)} 
= \sum_{i=1}^{n}{\sum_{t = r_i}^{c_i(\mathcal{S})}{q_{t, \mathcal{S}}(i)}}$.
\end{lemma}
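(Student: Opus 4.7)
The plan is to prove the equality first by simply swapping the order of summation, and then establish the inequality by instantiating the equality with $\mathcal{S} = \mathcal{OPT}$ and bounding each inner sum by $f_i(\mathcal{OPT})^2$.

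For the equality, recall that the paragraph preceding the lemma defines $q(t)$ so that $q(t) = \sum_{i=1}^{n} q_{t,\mathcal{S}}(i)$ for any work-conserving schedule $\mathcal{S}$, since the total remaining work at time $t$ is invariant across work-conserving schedules. By definition, $q_{t,\mathcal{S}}(i) = 0$ whenever $i \notin A(t,\mathcal{S})$, so the contribution of job $J_i$ is supported on $t \in [r_i, c_i(\mathcal{S})-1]$ (with the $t = c_i(\mathcal{S})$ term equal to $0$ and hence harmlessly included). Swapping summation order yields
\[
\sum_{t \in \mathbb{N}} q(t) \;=\; \sum_{t \in \mathbb{N}} \sum_{i=1}^{n} q_{t,\mathcal{S}}(i) \;=\; \sum_{i=1}^{n} \sum_{t=r_i}^{c_i(\mathcal{S})} q_{t,\mathcal{S}}(i).
\]

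For the inequality, I would apply the equality to $\mathcal{S} = \mathcal{OPT}$, which we may assume to be work-conserving by a standard exchange argument. The key observations are: (i) $q_{t,\mathcal{OPT}}(i) \leq p_i$ for every $t$, since $q_{r_i,\mathcal{OPT}}(i) = p_i$ and $q_{t,\mathcal{OPT}}(i)$ is non-increasing in $t$; (ii) $p_i \leq f_i(\mathcal{OPT})$, because a job cannot complete before it has been fully executed; and (iii) the number of terms in the inner sum is $c_i(\mathcal{OPT}) - r_i + 1 = f_i(\mathcal{OPT}) + 1$, with the final term vanishing. Combining (i)--(iii),
\[
\sum_{t=r_i}^{c_i(\mathcal{OPT})} q_{t,\mathcal{OPT}}(i) \;\leq\; f_i(\mathcal{OPT}) \cdot p_i \;\leq\; f_i(\mathcal{OPT})^2.
\]
Summing over $i$ gives $\sum_{t} q(t) \leq \sum_{i=1}^{n} f_i(\mathcal{OPT})^2 = F(\mathcal{OPT})$, which is the desired bound.

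There is no substantive obstacle here; the lemma is essentially a double-counting identity paired with the elementary bound $q_{t,\mathcal{OPT}}(i) \cdot (\text{active duration}) \leq p_i \cdot f_i(\mathcal{OPT}) \leq f_i(\mathcal{OPT})^2$. The only care needed is to justify the invariance of $q(t)$ across work-conserving schedules (already established in the paragraph above the lemma) and to note that restricting to work-conserving optima is without loss of generality for minimizing any reasonable flow-time objective.
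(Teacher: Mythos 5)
Your proof is correct and takes essentially the same approach as the paper: both rest on the same three observations, namely the double-sum identity $\sum_t q(t)=\sum_i \sum_{t=r_i}^{c_i(\mathcal{S})} q_{t,\mathcal{S}}(i)$, the bound $q_{t,\mathcal{OPT}}(i)\le p_i\le f_i(\mathcal{OPT})$, and the count of $f_i(\mathcal{OPT})$ nonvanishing terms in the inner sum; the paper merely presents the chain top-down starting from $F(\mathcal{OPT})=\sum_i \sum_{t=r_i}^{c_i(\mathcal{OPT})-1} f_i(\mathcal{OPT})$ while you argue bottom-up. Your explicit remark that $\mathcal{OPT}$ may be taken work-conserving is a small point the paper leaves tacit, and is worth keeping.
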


\begin{proof}
Observe that for any job $J_i$ and any time $t$, we have 
$f_i(\mathcal{OPT}) \geq p_i \geq q_{t, \mathcal{OPT}}(i)$.
Thus, for any work-conserving schedule $\mathcal{S}$, we have
\begin{align*}
F(\mathcal{OPT}) 
 &=    \sum_{i=1}^{n}{\sum_{t = r_i}^{c_i(\mathcal{OPT})-1}{f_i(\mathcal{OPT})}} 
\geq \sum_{i=1}^{n}{\sum_{t = r_i}^{c_i(\mathcal{OPT})-1}{q_{t, \mathcal{OPT}}(i)}}\\
&=\sum_{t \in \mathbb{N}}{q(t)}
= \sum_{i=1}^{n}{\sum_{t = r_i}^{c_i(\mathcal{S})}{q_{t, \mathcal{S}}(i)}}.
\end{align*}
\end{proof}

To prove Theorem~\ref{thrm: algoN}, we use Lemma~\ref{prop: LBQ} 
to upper bound $\sum_{i=1}^{n}{norm_i(\theta)^2}$. 
Let $I = \{i | norm_i(\theta) = 1, i \in [1,n]\}$.
Obviously, $\sum_{i \in I}{norm_i(\theta)^2} \leq F(\mathcal{OPT})$.
Thus, we only need to consider job indices that are not in $I$.
Let $i \in [1,n] \setminus I$.
Define $T_i = [r_i, t_i-1] = \{r_i, r_i+1, r_i+2, \cdots, t_i-1\}$. 
Because $norm_i(\theta) = t_i-r_i = |T_i|$, 
we have
\begin{equation}\nonumber
norm_i(\theta)^2 = |T_i|^2 = \Theta\left(\sum_{t \in T_i}{(t-r_i)} \right).
\end{equation}
When $t \in T_i$, $\frac{t-r_i}{q_t(i)} < \theta$, 
and thus $t-r_i < \theta q_t(i)$.
As a result, 
$norm_i(\theta)^2 = O(\theta)\sum_{t \in T_i}{q_t(i)}$.
We then have
\[
\sum_{i \in [1, n] \setminus I}{norm_i(\theta)^2} 
= O(\theta)\sum_{i \in [1, n] \setminus I}{\sum_{t \in T_i}{q_t(i)}}
\stackrel{\text{by Lemma~\ref{prop: LBQ}}}{=}   O(\theta)F(\mathcal{OPT}).
\]
\section{Proof of Theorem~\ref{thrm: algoS}}\label{sec: algoS}
Let $u(t)$ be the number of starving tasks at time $t$.  
Thus, by the design of $BAL(\theta)$, 
$c_i(\mathcal{BAL}) \leq t_i+u(t_i)$. 
Therefore, $starv_i(\theta) = c_i(\mathcal{BAL}) - t_i \leq u(t_i)$. 
Define $t^* = \argmax_{t \in \mathbb{N}}{u(t)}$. 
We then have $starv_i(\theta) \leq u(t^*)$ and thus 
$\sum_{i = 1}^{n}{starv_i(\theta)^2} = O(n \cdot u(t^*)^2)$.
As a result, to prove Theorem~\ref{thrm: algoS}, it is sufficient to show
\begin{equation} \label{eq: ultimategoal} \tag{$\ast$}
F(\mathcal{OPT}) = \Omega\left(\sqrt{\theta} \cdot u(t^*)^2\right).
\end{equation}

\subsection{The First Lower Bound of $F(\mathcal{OPT})$}
By Lemma~\ref{prop: LBQ}, we have
\begin{equation} \label{eq: keylb1}
F(\mathcal{OPT}) 
\geq \sum_{i=1}^{n}{\sum_{t = r_i}^{t_i}{q_{t}(i)}}
\geq \sum_{i=1}^{n}{\sum_{t = r_i}^{t_i}{q_{t_i}(i)}} 
\geq \theta \sum_{i=1}^{n}{q_{t_i}(i)^2}, 
\end{equation}
where the last inequality holds because by the design of $BAL(\theta)$, 
$t_i-r_i \geq \theta q_{t_i}(i)$ for any FaS job $J_i$. 
For any FaN job $J_i$, we have $q_{t_i}(i) =0$. 


\subsection{The Second Lower Bound of $F(\mathcal{OPT})$}\label{subsubsec: lb2}
Observe that to minimize $F(\mathcal{S})$ when 
all jobs are released at the same time, 
jobs should be executed in increasing order of their processing times. 
To derive the second lower bound of $F(\mathcal{OPT})$ by this idea, 
jobs that are not active at time $t^*$ under $\mathcal{OPT}$ are ignored (and thus their flow times are 
not counted in $F(\mathcal{OPT})$). For every job $J_i$ that is active at time $t^*$ under $\mathcal{OPT}$, 
we gift $F(\mathcal{OPT})$ by setting $r_i = t^*$ and $p_i = q_{t^*, \mathcal{OPT}}(i)$.
We then execute jobs in increasing order of their new processing times. 
The above idea will be used in Lemma~\ref{lemma: LB2}.


\subsection{Proof of Eq.~\eqref{eq: ultimategoal}}
For any two maps $f$ and $f'$, we write $f$ \textbf{dominates} $f'$, 
or $f'$ is dominated by $f$, if $\dom f' = \dom f$ and $f'(i) \leq f(i), 
\forall i \in \dom f$, where $\dom f$ denotes the domain of $f$. 
We prove the following lemma in Section~\ref{sec: overview}.

\begin{lemma} \label{lemma: LB1}
There is a map $m_{\mathcal{OPT}}$ dominated by $q_{t^*, \mathcal{OPT}}$ such that 
\begin{description}
\item[P1:] $\sum_{i=1}^{n}{m_{\mathcal{OPT}}(i)^2} \leq \sum_{i=1}^{n}{q_{t_i}(i)^2}$, and
\item[P2:] $\sum_{i=1}^{n}{m_{\mathcal{OPT}}(i)} \geq \frac{1}{4} \cdot u(t^*)$.
\end{description}
\end{lemma}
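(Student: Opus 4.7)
My plan is to construct $m_{\mathcal{OPT}}$ by charging OPT's residual work at time $t^*$ against $\mathcal{BAL}$'s starving backlog. Let $S$ denote the set of jobs starving under $\mathcal{BAL}$ at $t^*$, and write $u_i$ for the number of remaining starving tasks of $J_i$ with $i \in S$, so that $u(t^*) = \sum_{i \in S} u_i$ and $u_i \leq q_{t_i}(i)$. The natural first candidate is $m_{\mathcal{OPT}}(i) = \min\{q_{t^*,\mathcal{OPT}}(i),\, q_{t_i}(i)\}$, which yields P1 for free; the real work lies in forcing P2.

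First I would split $S$ into Class A, jobs $i$ with $q_{t^*,\mathcal{OPT}}(i) \geq u_i/2$, and Class B, the rest. For Class A the candidate map already gives $m_{\mathcal{OPT}}(i) \geq u_i/2$, and if Class A accounts for a constant fraction of $u(t^*)$ we are essentially done. Otherwise Class B dominates, and by its very definition OPT has collectively processed $\Omega(u(t^*))$ more tasks of Class B jobs than $\mathcal{BAL}$ has by time $t^*$.

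Next I would invoke work-conservation: both algorithms process the same total number of tasks by $t^*$, so OPT's surplus on Class B must be matched by an equal \emph{deficit} on some other indices $j$, i.e.\ indices with $q_{t^*,\mathcal{OPT}}(j) > q_{t^*,\mathcal{BAL}}(j)$, with total deficit $\Omega(u(t^*))$. I would use these $j$'s to redirect $m_{\mathcal{OPT}}$-mass, capping the value assigned to each $j$ so that the overall construction still respects the sum-of-squares bound P1. To make the charging go through, I would bucket the jobs in $\mathrm{FaS}$ dyadically by $q_{t_i}(i)$ and, within each bucket, match the redirected OPT indices to a number of starving jobs whose collective $q_{t_i}(i)^2$ can absorb the assigned charge. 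The factor $1/16 = 2^{-4}$ in P2 is consistent with this approach: four factor-$1/2$ losses would arise from (i) the A/B dichotomy, (ii) the ahead/behind threshold on individual jobs, (iii) selecting the appropriate dyadic bucket to absorb the charge, and (iv) a final averaging when turning aggregate bounds into per-index assignments.

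The main obstacle, as I see it, is the case where OPT's deficit is concentrated on FaN jobs, for which $q_{t_j}(j) = 0$ and hence $m_{\mathcal{OPT}}(j)$ cannot take the value we would like to assign without breaking P1. The plausible way around this is a structural observation: if $\mathcal{BAL}$ is ahead of OPT on a FaN job $j$, then $\mathcal{BAL}$ must have serviced $j$ under its SRPT rule at time slots when no job was starving, meaning $\mathcal{BAL}$'s work during those slots was \emph{not} spent on $S$. This places a quantitative constraint relating the FaN-deficit to $u(t^*)$ itself, which should either rule the bad case out or absorb it into the $1/16$ constant. Making this structural dichotomy clean and quantitative — and bookkeeping it through the dyadic buckets so that P1 and P2 both hold simultaneously — is where I expect the main technical difficulty to lie.
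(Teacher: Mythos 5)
Your high-level skeleton is in the right spirit — compare the remaining-work profiles of $\mathcal{BAL}$ and $\mathcal{OPT}$ at $t^*$, use work-conservation to locate the indices where $\mathcal{OPT}$ has a surplus of remaining work, and charge those indices. But the step you flag as the ``main obstacle'' is not a corner case to be patched; it is the entire technical content of the lemma, and your proposal does not close it. When $\mathcal{OPT}$'s surplus of remaining work sits on indices $j$ for which $q_{t_j}(j)=0$ (or more generally on indices whose $q_{t_j}(j)$ contributes little to the budget), you have no mechanism to certify that assigning $m_{\mathcal{OPT}}(j)$ to those indices keeps $\sum_j m_{\mathcal{OPT}}(j)^2$ within the budget $\sum_i q_{t_i}(i)^2$. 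Work-conservation only gives a \emph{sum} identity and says nothing about how $q_{t^*,\mathcal{OPT}}$ is \emph{distributed}; in particular, it does not preclude $q_{t^*,\mathcal{OPT}}$ from being concentrated on a few large values, which would blow up the squared sum of any dominated map that still captures $\Omega(u(t^*))$ total mass. Your dyadic bucketing would need a guarantee relating the shape of $q_{t^*,\mathcal{OPT}}$ to the shape of the budget vector, and neither your Class A/B dichotomy nor your structural aside about SRPT slots supplies one.

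The paper supplies exactly this guarantee via \emph{majorization}, and that is the idea your sketch is missing. It goes through $q_{t^*,\mathcal{SRPT}}$ as an intermediate: Lemma~\ref{lemma: FRFt} shows $q_{t^*,\mathcal{SRPT}}$ majorizes $q_{t^*,\mathcal{OPT}}$ (SRPT keeps the top-$k$ remaining-work sums largest among all work-conserving schedules), and Lemma~\ref{lemma: TR123} shows a truncation of the ``frozen'' map $\hat{q}_{t^*}$ majorizes $q_{t^*,\mathcal{SRPT}}$. The workhorse is Lemma~\ref{lemma: S1S2}, a lossy-transfer lemma stating that if $f$ majorizes $g$ and $f$ dominates a $c$-proper map, then $g$ dominates a $\tfrac{c}{4}$-proper map; it is the majorization hypothesis, not mere equality of sums, that lets the combinatorial regrouping in its proof bound the squared sum of the transferred map. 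Applying this transfer twice along the chain $TR[\hat{q}_{t^*},q(t^*)] \to q_{t^*,\mathcal{SRPT}} \to q_{t^*,\mathcal{OPT}}$ accounts for $\tfrac{1}{16}=\tfrac14\cdot\tfrac14$ — a different decomposition of the constant than your four factor-$\tfrac12$ losses. Without majorization (or some equivalent top-$k$ dominance between the two remaining-work profiles), your charging argument has no handle on the FaN-deficit case and the proof does not go through.
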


By Eq.~\eqref{eq: keylb1} and P1, we have 
\begin{equation}\label{eq: finalkeylb1}
F(\mathcal{OPT}) \geq \theta \sum_{i=1}^{n}{m_{\mathcal{OPT}}(i)^2}.
\end{equation}
Because $m_{\mathcal{OPT}}$ is dominated by $q_{t^*, \mathcal{OPT}}$, we can use 
the technique developed for the second lower bound to prove the following lemma, whose 
proof is given in Appendix~\ref{appendix: lemma: LB2}. 
\begin{lemma}\label{lemma: LB2}
Let $m_{\mathcal{OPT}}$ be the map defined in Lemma~\ref{lemma: LB1}. 
Reindex jobs so that 
\begin{equation} \label{eq: reindex}
m_{\mathcal{OPT}}(1) \leq m_{\mathcal{OPT}}(2) \leq \cdots \leq m_{\mathcal{OPT}}(n).
\end{equation}
We then have
$\displaystyle 
F(\mathcal{OPT}) \geq
\sum_{i=1}^{n} {\left(\sum_{h=1}^{i} {m_{\mathcal{OPT}}(h)}\right)^2}$.
\end{lemma}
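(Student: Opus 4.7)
The plan is to execute the two giftings sketched in Section~\ref{subsubsec: lb2}. Let $A = A(t^*, \mathcal{OPT})$ be the set of jobs active at $t^*$ under $\mathcal{OPT}$. Because $m_{\mathcal{OPT}}$ is dominated by $q_{t^*, \mathcal{OPT}}$ and $q_{t^*, \mathcal{OPT}}(i) = 0$ for every $i \notin A$, each non-active index contributes $m_{\mathcal{OPT}}(i) = 0$, so under the reindexing of Eq.~\eqref{eq: reindex} these indices slot harmlessly at the front together with any $i \in A$ that also happens to satisfy $m_{\mathcal{OPT}}(i) = 0$; their summands on the right-hand side of the inequality vanish. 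It therefore suffices to compare $F(\mathcal{OPT})$ with the sum restricted to indices of positive $m_{\mathcal{OPT}}(i)$.

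First I would drop all $i \notin A$ and use $r_i \le t^*$ for $i \in A$ to get $F(\mathcal{OPT}) \ge \sum_{i \in A} (c_i(\mathcal{OPT}) - t^*)^2$. Second, I would build an auxiliary schedule $\sigma'$ on the gifted instance $I'$ (common release date $t^*$, processing times $m_{\mathcal{OPT}}(i)$): imitate $\mathcal{OPT}$ from time $t^*$ onward, but declare $J_i$ complete as soon as $m_{\mathcal{OPT}}(i)$ of its tasks have been processed, leaving the remaining slots idle. Since $m_{\mathcal{OPT}}(i) \le q_{t^*, \mathcal{OPT}}(i)$ and $\mathcal{OPT}$ executes $q_{t^*, \mathcal{OPT}}(i)$ slots of $J_i$ between $t^*$ and $c_i(\mathcal{OPT})$, the completion time $c'_i$ of $J_i$ under $\sigma'$ satisfies $c'_i \le c_i(\mathcal{OPT})$, giving $\sum_{i \in A}(c_i(\mathcal{OPT}) - t^*)^2 \ge \sum_{i \in A}(c'_i - t^*)^2$. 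Third, I would invoke the classical fact that on a single machine with a common release date, scheduling in nondecreasing order of processing time (the SPT rule) minimizes the sum of squared completion times --- an immediate adjacent-swap argument, since exchanging a larger job in front of a smaller one strictly increases this sum. Applied to $I'$, this yields $\sum_{i \in A}(c'_i - t^*)^2 \ge \sum_{i=1}^{n}\bigl(\sum_{h=1}^{i} m_{\mathcal{OPT}}(h)\bigr)^2$, because under the reindexing of Eq.~\eqref{eq: reindex} the SPT order for $I'$ is exactly $1, 2, \ldots, n$ restricted to indices of positive size, and in that order job $J_i$ completes at time $t^* + \sum_{h=1}^{i} m_{\mathcal{OPT}}(h)$.

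The main obstacle is verifying step two cleanly: that the ``simulate $\mathcal{OPT}$ and stop early'' schedule $\sigma'$ is legitimate on $I'$ and that its completion times dominate the SPT completion times via $c_i(\mathcal{OPT}) \ge c'_i \ge (\text{SPT completion time on } I')$. The first dominance is immediate from the construction (we process each $J_i$ for at most $m_{\mathcal{OPT}}(i) \le q_{t^*, \mathcal{OPT}}(i)$ slots, which $\mathcal{OPT}$ already provides before $c_i(\mathcal{OPT})$), and the second is the SPT-optimality statement above. Everything else is bookkeeping: absorbing zero-size jobs into the prefix of the reindexing, and using monotonicity of squaring on nonnegatives so that term-by-term inequalities survive summation.
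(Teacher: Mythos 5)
Your three steps (restrict to jobs active at $t^*$ and replace $r_i$ by $t^*$; construct the truncated imitation schedule $\sigma'$ so $c'_i \le c_i(\mathcal{OPT})$; invoke SPT-optimality on the gifted common-release instance) are exactly the argument the paper carries out in Appendix~\ref{appendix: lemma: LB2}, just written out more explicitly. The proof is correct and takes essentially the same approach as the paper's.
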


We are now ready to prove Eq.~\eqref{eq: ultimategoal}.
Like Lemma~\ref{lemma: LB2}, we reindex jobs so that 
Eq.~\eqref{eq: reindex} holds.
We then have
\begin{align*}
F(\mathcal{OPT}) 
&= \Omega\left(\sum_{i=1}^{n}{\left(\sum_{h=1}^{i}{m_{\mathcal{OPT}}(h)}\right)^2}
          +\theta \sum_{i=1}^{n}{m_{\mathcal{OPT}}(i)^2}\right) \\
&= \Omega\left(\sqrt{\theta}\right) 
          \sqrt{\left(\sum_{i=1}^{n}{\left(\sum_{h=1}^{i}{m_{\mathcal{OPT}}(h)}\right)^2}\right)
          \left(\sum_{i=1}^{n}{{m_{\mathcal{OPT}}(i)}^2}\right)}\\
&= \Omega\left(\sqrt{\theta}\right)\sum_{i=1}^{n}
                                                 {
                                                   \left(\sum_{h=1}^{i} {m_{\mathcal{OPT}}(h)} \right)            
                                                      m_{\mathcal{OPT}}(i)
                                                 }\\
&= \Omega\left(\sqrt{\theta}\right)\left(\sum_{i=1}^{n}{m_{\mathcal{OPT}}(i)}\right)^2  
= \Omega\left(\sqrt{\theta}\right)u(t^*)^2,
\end{align*}
where the first equality follows from Lemma~\ref{lemma: LB2} and Eq.~\eqref{eq: finalkeylb1}, 
the second equality follows from the AM–GM inequality, 
the third equality follows from the Cauchy-Schwarz inequality, 
and the last equality follows from P2 in Lemma~\ref{lemma: LB1}.

\section{Proof of Lemma~\ref{lemma: LB1}}\label{sec: overview} 
Because Lemma~\ref{lemma: LB1} is trivial when $u(t^*) = 0$, 
we assume $u(t^*) \geq 1$ in the following proof.
We first introduce the following shorthand notations. 
\begin{definition}
For any map $f$, define $S(f) = \sum_{i \in \dom f}{f(i)}$,
and define $f^2$ as a map such that $\dom f^2 = \dom f$ 
and $f^2(x) = f(x)^2$ for any $x \in \dom f^2$.
\end{definition}
Informally, to prove Lemma~\ref{lemma: LB1}, we have to construct a map 
$m_{\mathcal{OPT}}$ dominated by $q_{t^*, \mathcal{OPT}}$ such that 
$S({m_{\mathcal{OPT}}}^2)$ is sufficiently small but
$S(m_{\mathcal{OPT}})$ is sufficiently large. 
Thus, for any $c > 0$, we say that a map $f$ is a 
$c$-\textbf{proper} map if $f$ satisfies the following two constraints:
\begin{enumerate}
\item $S(f^2) \leq \sum_{i=1}^{n}{q_{t_i}(i)^2}$, 
\item $S(f) \geq c \cdot u(t^*)$.
\end{enumerate}

To prove Lemma~\ref{lemma: LB1}, it is sufficient to construct a $\frac{1}{4}$-proper 
map dominated by $q_{t^*, \mathcal{OPT}}$. 
To this end, we first analyze the relationship between 
$q_{t^*, \mathcal{OPT}}$ and $q_{t^*, \mathcal{SRPT}}$. 
In particular, we show that  $q_{t^*, \mathcal{SRPT}}$ \textit{majorizes} $q_{t^*, \mathcal{OPT}}$. 
We then construct another map $h$ based on $q_{t^*}$ such that $h$ majorizes $q_{t^*, \mathcal{SRPT}}$. 
Because majorization is transitive, $h$ majorizes $q_{t^*, \mathcal{OPT}}$. 
We then reduce the task of constructing a $\frac{1}{4}$-proper 
map dominated by $q_{t^*, \mathcal{OPT}}$ to the task of 
constructing a $1$-proper map dominated by $h$. 

\subsection{The Relationship Between 
$q_{t^*, \mathcal{OPT}}$ and $q_{t^*, \mathcal{SRPT}}$} 
Recall that SRPT always executes the active job $J_i$ 
that has the smallest $q_{t, \mathcal{SRPT}}(i)$. 
Thus, SRPT avoids executing the active job that has 
the largest number of remaining tasks. It is not difficult to show that, 
for any positive integer $k$,
among all work-conserving schedules, 
SRPT maximizes the sum of the top-$k$ largest numbers of 
remaining tasks at any time. 
Lemma~\ref{lemma: FRFt} formalizes the above statement using majorization. 
Roughly speaking, if map $f$ majorizes map $g$, 
then for any positive integer $k$, 
the sum of the top-$k$ largest outputs of $f$ is greater than or equal to that of $g$. 

To define majorization, we first introduce the following definition, 
which sorts the domain of a map in decreasing order of their outputs. 
\begin{definition}
Let $f$ be any map. 
Define $\pi_f$ as a function that maps any $k \in [1, |\dom f|]$ to 
the element in $\dom f$ that has the $k$th largest output of $f$ 
(ties can be broken arbitrarily).  
Thus, $\dom f = \{ \pi_f(1), \pi_f(2), \cdots, \pi_f(|\dom f|)\}$
and
$f(\pi_f(1)) \geq f(\pi_f(2)) \geq \cdots \geq f(\pi_f(|\dom f|))$.
\end{definition}
Next, we define the sum of the top-$k$ outputs of a map $f$, denoted by $S_k(f)$.
\begin{definition}
Let $f$ be any map. Define
\begin{equation}\nonumber
  S_k(f)=
    \begin{cases}
      0 & \mbox{if $k =0$}\\
      \sum_{j=1}^{k}{f(\pi_f(j))} & \mbox{if $k \in [1, |\dom f|]$}\\
      S(f) & \mbox{if $k > |\dom f|$}
    \end{cases}       
\end{equation}
\end{definition}
\begin{definition}
For any two maps $f$ and $g$, $f$ \textbf{majorizes} $g$
if the following two conditions are met:
\begin{enumerate}
\item $S(f) = S(g)$, 
\item $S_k(f) \geq S_k(g), \forall k \in \mathbb{N}$.
\end{enumerate}
\end{definition}

The next lemma formalizes the previous discussion on SRPT. 
The proof is based on the simple property that SRPT avoids executing the active jobs that have 
the most remaining tasks and can be found in Appendix~\ref{appendix: lemma: FRFt}. 
\begin{lemma}\label{lemma: FRFt}
Let $\mathcal{S}$ be any work-conserving schedule. 
For any $t \in \mathbb{N}$,
$q_{t, \mathcal{SRPT}}$ majorizes $q_{t, \mathcal{S}}$.
\end{lemma}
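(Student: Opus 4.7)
The plan is to prove the lemma by induction on $t$, decomposing each unit of time into two elementary operations and verifying that each one preserves majorization. For the base case $t = 0$, every work-conserving schedule satisfies $q_{0,\mathcal{S}}(i) = p_i$ when $r_i = 0$ and $q_{0,\mathcal{S}}(i) = 0$ otherwise, so $q_{0,\mathcal{SRPT}} = q_{0,\mathcal{S}}$ and majorization holds trivially.

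For the inductive step, the transition from time $t$ to $t+1$ splits into (a) the arrival of jobs with release time $t+1$, which updates certain previously zero entries of both maps to the job's processing time in exactly the same way, and (b) the execution of slot $[t]$, which decrements a single positive entry of each map by one (possibly at different positions in the two schedules). Operation (a) preserves majorization by the following identity, which I would verify as an auxiliary lemma: if $f(p) = g(p) = 0$ and we update $f(p), g(p) \leftarrow v$ to produce $f', g'$, then $S_k(f') = \max\{S_k(f), S_{k-1}(f) + v\}$ (and similarly for $g'$), so the inductive bounds $S_{k-1}(f) \geq S_{k-1}(g)$ and $S_k(f) \geq S_k(g)$ propagate through the $\max$.

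The heart of the argument is operation (b). Let $q = q_{t,\mathcal{SRPT}}$ and $q' = q_{t,\mathcal{S}}$, and let $\bar q, \bar q'$ denote the maps right after slot $[t]$. If $S(q) = 0$ the step is vacuous; otherwise work-conservation forces $S(q) = S(q')$, both schedules decrement a positive entry, and $S(\bar q) = S(\bar q')$. Because SRPT decrements the smallest positive entry (the one in position $n_+(q)$ in sorted order, where $n_+$ counts positive values), I have $S_k(\bar q) = S_k(q)$ for $k < n_+(q)$ and $S_k(\bar q) = S_k(q) - 1$ for $k \geq n_+(q)$; for the arbitrary schedule only the weaker bound $S_k(\bar q') \in \{S_k(q'), S_k(q') - 1\}$ is available. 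A case split then closes the argument: when $k < n_+(q)$, the inductive hypothesis gives $S_k(\bar q) = S_k(q) \geq S_k(q') \geq S_k(\bar q')$, and when $k \geq \max(n_+(q), n_+(q'))$ both quantities equal $S(q) - 1$.

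The main obstacle is the subcase $n_+(q) \leq k < n_+(q')$, in which SRPT's $S_k$ drops by one while the other schedule's $S_k$ might not drop at all. I would close this subcase by integrality: since the top-$k$ of $q'$ omits at least one positive integer entry, $S_k(q') \leq S(q') - 1$, and combining with $S(q') = S(q) = S_k(q)$ yields $S_k(q') \leq S_k(q) - 1$, so $S_k(\bar q') \leq S_k(q') \leq S_k(q) - 1 = S_k(\bar q)$. This integer-valued bookkeeping around $n_+(q)$ versus $n_+(q')$ is the only delicate piece; once it is in place, the remaining cases are routine, and the induction closes.
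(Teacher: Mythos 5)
Your proposal is correct and follows essentially the same route as the paper: induction on $t$, decomposing each transition into the execution of slot $[t]$ and the arrival of jobs released at time $t+1$, and verifying that each operation preserves majorization. You actually supply more detail than the paper on the execution step — the paper simply asserts that it follows from SRPT decrementing the smallest positive entry, whereas you give the full case split including the integrality trick for the subcase $n_+(q) \leq k < n_+(q')$; and for the arrival step, your $\max$-identity plays the role of the paper's Lemma~\ref{lemma: union} (whose case analysis amounts to the same observation).
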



\subsection{A Map $h$ That Majorizes $q_{t^*, \mathcal{SRPT}}$ and a $1$-Proper Map Dominated by $h$}
    We begin by defining a map $\hat{q}_{t^*}$ that \emph{freezes} 
    the number of remaining tasks for a job once it becomes starving. 
    For an FaN job (or an FaS job with $t^* \leq t_i$) we set
    \[
    \hat{q}_{t^*}(i) = q_{t^*}(i),
    \]
    while for an FaS job with $t^* > t_i$ we set
    \[
    \hat{q}_{t^*}(i) = q_{t_i}(i).
    \]
    By Line~12 of Algorithm~\ref{algo}, this ensures that for every job $i$,
    \[
    \hat{q}_{t^*}(i) \geq q_{t^*, \mathcal{SRPT}}(i),
    \]
    so that $\hat{q}_{t^*}$ dominates $q_{t^*, \mathcal{SRPT}}$.

    Although $\hat{q}_{t^*}$ dominates $q_{t^*, \mathcal{SRPT}}$, 
    its sum $S(\hat{q}_{t^*})$ may exceed $q(t^*) = S(q_{t^*, \mathcal{SRPT}})$. 
    To address this, we truncate $\hat{q}_{t^*}$ by reducing the smallest nonzero entries of 
    $\hat{q}_{t^*}$ so that $S(\hat{q}_{t^*}) = q(t^*)$
    while preserving the majorization relation. The resulting truncated map is $h$.
    The detailed construction of $h$ can be found in Appendix~\ref{appendix: construction_h}.

    To obtain a $1$-proper map $h'$ dominated by $h$,  
    we set the values of $h$ to zero for all FaN jobs, 
    ensuring that $h'$ satisfies the required properness conditions. 
    The detailed construction of $h'$ can be found in Appendix~\ref{appendix: construction_h'}.

\subsection{The Reduction}
The next lemma shows that 
to construct a $\frac{1}{4}$-proper map dominated by $q_{t^*, \mathcal{OPT}}$, 
it suffices to construct a $1$-proper map dominated by $h$ (e.g., $h'$). 
\begin{lemma}\label{lemma: S1S2} 
Let $f$ and $g$ be such that $f$ majorizes $g$. 
For any $c > 0$, if there is a $c$-proper map dominated by $f$, then 
there is a $\frac{c}{4}$-proper map dominated by $g$.
\end{lemma}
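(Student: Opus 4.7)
The plan is to invoke the Hardy--Littlewood--P\'olya characterization of majorization: for two finite nonnegative vectors of the same length, if one majorizes the other then the decreasing rearrangement of the majorized vector equals a doubly stochastic matrix applied to the decreasing rearrangement of the majorizing vector. If $\dom f$ and $\dom g$ differ in size I first zero-pad the shorter one, which is consistent with $S$ and every $S_k$ and hence preserves the majorization hypothesis. I then transport the hypothesized $c$-proper map $h$ through the same doubly stochastic matrix $P$ to build the required map $h'$ on $\dom g$.

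Concretely, let $\tilde f_j := f(\pi_f(j))$, $\tilde g_j := g(\pi_g(j))$, and $\tilde h_j := h(\pi_f(j))$. Because $h$ is dominated by $f$, the componentwise bound $\tilde h_j \leq \tilde f_j$ holds. Let $P$ be a doubly stochastic matrix with $\tilde g = P\tilde f$, and define $\tilde h'_j := \sum_i P_{ji}\,\tilde h_i$. Nonnegativity of $P$ together with $\tilde h \leq \tilde f$ coordinatewise yields $\tilde h'_j \leq \sum_i P_{ji}\,\tilde f_i = \tilde g_j$. Finally, I define $h'$ on $\dom g$ by $h'(\pi_g(j)) := \tilde h'_j$, so that $h'(i) \leq g(i)$ for every $i \in \dom g$; that is, $h'$ is dominated by $g$.

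Verifying the two properness inequalities is then immediate from the row/column sum properties of $P$. Since each column of $P$ sums to one, $S(h') = \sum_j \tilde h'_j = \sum_i \tilde h_i = S(h) \geq c \cdot u(t^*)$. For the second moment, Jensen's inequality applied to the row distribution of $P$ gives $(\tilde h'_j)^2 \leq \sum_i P_{ji}\,\tilde h_i^2$; summing over $j$ and again using column sums yields $S((h')^2) \leq S(h^2) \leq \sum_{i=1}^{n}{q_{t_i}(i)^2}$. Hence $h'$ is in fact $c$-proper, which is strictly stronger than the advertised $c/4$-proper conclusion and so establishes the lemma comfortably.

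The main obstacle I expect is producing $P$ without importing heavy machinery. I would give a self-contained proof by induction on the number of coordinates where sorted $\tilde f$ and sorted $\tilde g$ disagree: pick the largest disagreeing index and perform a single ``Robin Hood'' transfer on $\tilde f$, moving mass from a larger coordinate to a smaller one while preserving the sum and the decreasing order so that sorted $\tilde f$ becomes pointwise closer to sorted $\tilde g$. Each such elementary transfer corresponds to multiplication by a $2\times 2$ doubly stochastic block, so applying the same block to $\tilde h$ preserves $S(\tilde h)$ and, being a convex averaging, only weakly decreases $S(\tilde h^2)$ while preserving the pointwise bound $\tilde h \leq \tilde f$. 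Iterating turns $\tilde f$ into $\tilde g$ and transforms $\tilde h$ into the required $\tilde h'$; composing the elementary blocks yields the doubly stochastic matrix $P$ used above, so the entire argument stays elementary.
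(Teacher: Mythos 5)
Your proof is correct, and it takes a genuinely different route from the paper's. The paper proceeds via a hands-on combinatorial construction: it partitions the support of $f'$ into two sets $I_1$ (where $f'(a_k) \leq g(b_k)$) and $I_2$ (where $f'(a_k) > g(b_k)$), builds a candidate map $g_1$ by simply copying the small $f'$-values into $g$'s domain, builds a second candidate $g_2$ by a round-by-round "packing" of contiguous blocks $[x(j),y(j)]$ of $g$'s sorted coordinates against the large $f'$-values, and then returns whichever of $g_1, g_2$ captures more than half of $S(f')$; the two successive factor-of-$2$ losses are exactly where the paper's $c/4$ comes from. You instead invoke the Hardy--Littlewood--P\'olya/Birkhoff characterization of majorization to obtain a doubly stochastic matrix $P$ with $\tilde g = P\tilde f$, push $\tilde h$ through $P$ to get $\tilde h'$, and observe that nonnegativity of $P$ plus $\tilde h \leq \tilde f$ gives domination, column-stochasticity gives $S(h') = S(h)$, and Jensen plus column-stochasticity gives $S(h'^2) \leq S(h^2)$. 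Your zero-padding step and the identification that coordinates of $h'$ sitting over padded zeros of $g$ must themselves vanish are both handled correctly, as is the crucial detail that $\tilde h$ is sorted by the same permutation $\pi_f$ as $\tilde f$ (not independently), so the pointwise bound $\tilde h \leq \tilde f$ survives. What this buys you is a cleaner argument and a strictly stronger conclusion: you get a $c$-proper map dominated by $g$, not merely a $(c/4)$-proper one. That in turn would let the paper conclude $1$-proper in place of $\tfrac{1}{16}$-proper in Lemma~\ref{lemma: LB1}, though of course this is absorbed by the $O(\cdot)$ notation downstream. The cost is the reliance on the HLP theorem; your sketch of an elementary alternative via "Robin Hood" ($2\times 2$ doubly stochastic) transfers is the standard way to make the argument self-contained, and its effect on $\tilde h$ (sum preserved, second moment weakly decreased, pointwise bound preserved) is exactly as you describe.
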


We only give a proof sketch of Lemma~\ref{lemma: S1S2} here. 
The complete proof can be found in Appendix~\ref{appendix: lemma: S1S2}. 
To prove Lemma~\ref{lemma: S1S2}, it suffices to prove that 
for any map $f'$ dominated by $f$, 
there is a map $g'$ dominated by $g$ such that 
$S(g'^2) \leq S(f'^2)$ and $S(g') \geq \frac{1}{4} S(f')$. 
Thus, if $f'$ is a $c$-proper map, then $g'$ is a $\frac{c}{4}$-proper map. 
The construction of $g'$ is purely combinatorial. 
Initially, $g' = g$ and we find a set $I \subseteq \dom f'$ 
such that $\sum_{i \in I}{f'(i)} \geq \frac{1}{2} S(f')$. 
We then construct a family $\{B_i\}_{i \in I}$ of mutually disjoint subsets of $\dom g'$,   
such that for each $i \in I$,   
\[
\frac{f'(i)}{2} \leq \sum_{b \in B_i}{g'(b)} \leq f'(i).
\] 
To this end, we may decrease $g'$. 
For any $b \notin \bigcup_{i \in I}B_i$, we set $g'(b) = 0$. 
$g'$ is then the desired map.

\subsection{Proof of Lemma~\ref{lemma: LB1}}
Because majorization is transitive, $h$ majorizes $q_{t^*, \mathcal{OPT}}$. 
Because there is a $1$-proper map dominated by $h$, 
Lemma~\ref{lemma: S1S2} implies that there is a $\frac{1}{4}$-proper map dominated 
by $q_{t^*, \mathcal{OPT}}$. This completes the proof of Lemma~\ref{lemma: LB1}. 

\section{Concluding Remark}
While numerous studies have explored strategies to mitigate the starvation issue inherent in SRPT scheduling 
\cite{bai2017pias,10.1007/BFb0037157,10.1145/2987550.2987563,9269382,10.1117/12.538820,646697,7018931,9521259}, 
no theoretical work has rigorously established whether such mitigation efforts lead to meaningful improvements. 
In this study, we address this gap by analyzing the competitive ratio of the $\ell_2$ 
norm of flow time. Our findings provide the first theoretical evidence that starvation mitigation 
significantly enhances SRPT performance.

\begin{credits}
\subsubsection*{Acknowledgments.}
This work was supported in part by the Ministry of Science and Technology of Taiwan 
(MOST 111-2221-E-004-003-MY2) and the National Science and Technology Council of Taiwan 
(NSTC 113-2221-E-004-011-MY2). 
The author is grateful to the anonymous reviewers for their valuable suggestions, 
particularly regarding the refinement of Lemma~\ref{lemma: LB1} and its proof.
\end{credits}

\bibliographystyle{splncs04}
\bibliography{paper}

\begin{appendix}
\section{Discussion on Standard Scheduling Algorithms}\label{sec: cr}
\subsection{Discussion on RR.}
The lower bound instance of RR in~\cite{doi:10.1137/090772228} shows 
that RR's competitive ratio for minimizing the $\ell_2$ norm of flow time 
is $\Omega(\sqrt{n}/ \log{n})$. The root cause is that RR frequently switches job execution, 
and thus delays the completion of many jobs. 
As a result, RR may have a poor average flow time, which in turn deteriorates 
the $\ell_2$ norm of flow time.

\subsection{Discussion on SRPT.}\label{appendix: SRPT}
\begin{proposition}
Let $\mathcal{S}$ be a schedule that minimizes the average flow time. 
Then $F(\mathcal{S}) \leq n \cdot F(\mathcal{OPT})$.
\end{proposition}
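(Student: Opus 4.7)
The plan is to use a simple chain of inequalities relating the $\ell_1$ and $\ell_2$ norms of the flow-time vector. The key observation is that the proposition essentially says: the schedule minimizing the sum of flow times is at worst a $\sqrt{n}$-competitive algorithm for the $\ell_2$ norm (since $\sqrt{F(\mathcal{S})} \leq \sqrt{n}\sqrt{F(\mathcal{OPT})}$), which is a classical norm-equivalence fact applied to the vector of flow times.

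First, I would sandwich $F(\mathcal{S})$ between the square of the $\ell_1$ norm of $\mathcal{S}$'s flow times. Since flow times are non-negative, the elementary inequality
\[
\sum_{i=1}^{n} f_i(\mathcal{S})^2 \;\leq\; \left(\sum_{i=1}^{n} f_i(\mathcal{S})\right)^2
\]
holds (expanding the right side produces the left side plus non-negative cross terms $2 f_i(\mathcal{S}) f_j(\mathcal{S})$). Next, I would invoke the optimality of $\mathcal{S}$ for average flow time, which gives $\sum_i f_i(\mathcal{S}) \leq \sum_i f_i(\mathcal{OPT})$, so the right-hand side is bounded by $\left(\sum_i f_i(\mathcal{OPT})\right)^2$. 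Finally, applying Cauchy--Schwarz (or, equivalently, the power-mean inequality) to the flow times of $\mathcal{OPT}$ yields
\[
\left(\sum_{i=1}^{n} f_i(\mathcal{OPT})\right)^2 \;\leq\; n \sum_{i=1}^{n} f_i(\mathcal{OPT})^2 \;=\; n \cdot F(\mathcal{OPT}).
\]
Chaining the three inequalities proves the claim.

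There is no real obstacle here: the argument is a two-line application of norm equivalence combined with the defining optimality of $\mathcal{S}$. The only subtlety worth flagging in the write-up is that $\mathcal{S}$ need not minimize the $\ell_2$ norm, so one cannot compare $F(\mathcal{S})$ directly to $F(\mathcal{OPT})$; the comparison is forced through the $\ell_1$ norm of flow times, where $\mathcal{S}$'s optimality does apply.
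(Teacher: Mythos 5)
Your proposal is correct and is essentially identical to the paper's proof, which chains the same three inequalities (Cauchy--Schwarz on $\mathcal{OPT}$'s flow times, $\ell_1$-optimality of $\mathcal{S}$, and $\sum_i a_i^2 \leq (\sum_i a_i)^2$ for nonnegative $a_i$), just written in the reverse order starting from $n\cdot F(\mathcal{OPT})$.
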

\begin{proof}
\begin{align*}
&n \cdot F(\mathcal{OPT}) = \left(\sum_{i=1}^{n}{1^2}\right) \left( \sum_{i=1}^{n}{{f_i(\mathcal{OPT})}^2}\right)
\geq \left(\sum_{i=1}^{n}{f_i(\mathcal{OPT})}\right)^2 \geq \left(\sum_{i=1}^{n}{f_i(\mathcal{S})}\right)^2 \\
&\geq \sum_{i=1}^{n}{f_i(\mathcal{S})^2} = F(\mathcal{S}),
\end{align*}
where the first inequality follows from the Cauchy-Schwarz inequality and the 
second inequality follows from the assumption that $\mathcal{S}$ minimizes the average flow time.
\end{proof}

It is well-known that SRPT minimizes the average flow time. Thus, we have the following corollary.
\begin{corollary}\label{coro: SRPTcr}
SRPT is $O(\sqrt{n})$-competitive for minimizing the $\ell_2$ norm of flow time.
\end{corollary}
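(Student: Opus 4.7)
The plan is to invoke the preceding proposition directly, after identifying SRPT as the algorithm to which it applies. The proposition asserts that for any schedule $\mathcal{S}$ that minimizes the average (equivalently, the sum of) flow time, $F(\mathcal{S}) \leq n \cdot F(\mathcal{OPT})$. So the entire corollary reduces to the observation that SRPT is such a schedule in our setting.

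First, I would appeal to the classical result (Schrage; Baker) that on a single machine with preemption, integer release times, and integer processing times, SRPT minimizes the total flow time $\sum_i f_i(\mathcal{S})$, hence also the average flow time. This is exactly the model fixed in Section~\ref{sec: thealgo}, so the hypothesis of the proposition is met with $\mathcal{S} = \mathcal{SRPT}$. I would state this citation explicitly to make the step airtight, since otherwise the corollary appears to be floating.

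Having established that, plugging $\mathcal{S} = \mathcal{SRPT}$ into the proposition gives $F(\mathcal{SRPT}) \leq n \cdot F(\mathcal{OPT})$. Taking square roots yields
\[
\sqrt{F(\mathcal{SRPT})} \leq \sqrt{n} \cdot \sqrt{F(\mathcal{OPT})},
\]
which by the definition of competitive ratio means that SRPT's competitive ratio for minimizing the $\ell_2$ norm of flow time is at most $\sqrt{n}$, i.e., $O(\sqrt{n})$.

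There is no real obstacle here: the work has already been done by the preceding proposition, whose proof combines the Cauchy--Schwarz inequality $\bigl(\sum f_i(\mathcal{OPT})\bigr)^2 \leq n \sum f_i(\mathcal{OPT})^2$ with the sum-flow-time optimality of $\mathcal{S}$ and the trivial inequality $\bigl(\sum f_i(\mathcal{S})\bigr)^2 \geq \sum f_i(\mathcal{S})^2$. The only thing the corollary adds is the named optimality of SRPT for average flow time, so the write-up can be a two-line deduction.
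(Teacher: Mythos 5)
Your proposal is correct and mirrors the paper's own argument: the paper likewise invokes the well-known fact that SRPT minimizes average flow time, applies the immediately preceding proposition to get $F(\mathcal{SRPT}) \leq n \cdot F(\mathcal{OPT})$, and the $O(\sqrt{n})$ competitive ratio on the $\ell_2$ norm follows by taking square roots. Nothing is missing.
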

Observe that the instance shown in Fig.~\ref{fig: ex1} not only shows that SRPT may 
cause job starvation, it also shows that the competitive ratio in Corollary~\ref{coro: SRPTcr} 
is asymptotically tight by comparing to FCFS. 

\subsection{Discussion on FCFS.}\label{appendix: FCFS}
\begin{proposition}
Let $\mathcal{S}$ be a schedule that minimizes the maximum flow time. 
Then $F(\mathcal{S}) \leq n \cdot F(\mathcal{OPT})$.
\end{proposition}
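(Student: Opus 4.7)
The proof plan is to mirror the short argument given just above for the SRPT case, replacing the Cauchy--Schwarz step with the elementary inequality relating the $\ell_2$ and $\ell_\infty$ norms. Concretely, I would chain three easy bounds: first, sandwich $\sum_i f_i(\mathcal{S})^2$ above by $n \cdot (\max_i f_i(\mathcal{S}))^2$; next, invoke the hypothesis that $\mathcal{S}$ minimizes the maximum flow time to replace $\max_i f_i(\mathcal{S})$ by $\max_i f_i(\mathcal{OPT})$; finally, bound $(\max_i f_i(\mathcal{OPT}))^2$ by $\sum_i f_i(\mathcal{OPT})^2 = F(\mathcal{OPT})$, since a single squared term is dominated by the sum of all (nonnegative) squared terms.

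Putting these three steps together in one display,
\begin{align*}
F(\mathcal{S})
&= \sum_{i=1}^{n}{f_i(\mathcal{S})^2}
\leq n \cdot \max_{i}{f_i(\mathcal{S})^2} \\
&\leq n \cdot \max_{i}{f_i(\mathcal{OPT})^2}
\leq n \sum_{i=1}^{n}{f_i(\mathcal{OPT})^2} \\
&= n \cdot F(\mathcal{OPT}),
\end{align*}
which is the claimed inequality. The only ``non-trivial'' ingredient is the middle inequality, and it follows immediately from the assumption that $\mathcal{S}$ is optimal for the $\ell_\infty$ norm of flow time.

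There is essentially no obstacle here: every step is a one-line application of a standard norm inequality or the hypothesis on $\mathcal{S}$. The main thing to get right is simply to parallel the formatting of the preceding SRPT proposition so that the reader sees the symmetry between the ``SRPT bounds $\ell_2$ via the $\ell_1$-optimal schedule'' argument (which uses Cauchy--Schwarz to pass between $\ell_1$ and $\ell_2$) and the ``FCFS bounds $\ell_2$ via the $\ell_\infty$-optimal schedule'' argument here (which uses the $\ell_\infty$--$\ell_2$ norm inequality instead). Combined with the fact that FCFS minimizes the maximum flow time~\cite{10.5555/314613.314715}, this proposition yields the corollary that FCFS is $O(\sqrt{n})$-competitive for minimizing the $\ell_2$ norm of flow time, exactly parallel to Corollary~\ref{coro: SRPTcr}.
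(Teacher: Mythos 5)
Your proof is correct and is essentially identical to the paper's: the paper chains $n \cdot F(\mathcal{OPT}) \geq n\max_i f_i(\mathcal{OPT})^2 \geq n\max_i f_i(\mathcal{S})^2 \geq F(\mathcal{S})$, which is the same three inequalities you write, just read in the opposite direction.
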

\begin{proof}
$$n \cdot F(\mathcal{OPT}) \geq n \max_{i}{{f_i(\mathcal{OPT})}^2} \geq n \max_{i}{{f_i(\mathcal{S})}^2} 
\geq F(\mathcal{S}),$$
where the
second inequality follows from the assumption that $\mathcal{S}$ minimizes the maximum flow time.
\end{proof}

Because FCFS minimizes the maximum flow time~\cite{10.5555/314613.314715}, 
we have the following corollary.
\begin{corollary}\label{coro: FCFScr}
FCFS is $O(\sqrt{n})$-competitive for minimizing the $\ell_2$ norm of flow time.
\end{corollary}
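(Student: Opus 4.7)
The plan is to apply the preceding proposition directly, instantiating its generic schedule $\mathcal{S}$ with $\mathcal{FCFS}$. First, I would invoke the classical result cited as~\cite{10.5555/314613.314715}, which asserts that FCFS minimizes the maximum flow time on a single machine in the preemptive online setting. This hypothesis is exactly what the preceding proposition requires of $\mathcal{S}$, so the proposition applies verbatim with $\mathcal{S} = \mathcal{FCFS}$, yielding the bound
\[
F(\mathcal{FCFS}) \leq n \cdot F(\mathcal{OPT}).
\]

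Next, I would translate this bound from $F$ (the sum of squared flow times) into the $\ell_2$ norm of flow time (the square root of $F$) by taking square roots on both sides. This gives
\[
\sqrt{F(\mathcal{FCFS})} \leq \sqrt{n} \cdot \sqrt{F(\mathcal{OPT})},
\]
so the competitive ratio of FCFS for minimizing the $\ell_2$ norm of flow time is at most $\sqrt{n} = O(\sqrt{n})$, which is precisely the claim.

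There is really no serious obstacle here, since the corollary is an immediate consequence of the proposition combined with the optimality of FCFS for the $\ell_{\infty}$ norm of flow time; the only mild care needed is to remember that the objective in the preceding proposition is $F$, whereas the $\ell_2$ norm is $\sqrt{F}$, so the $n$-factor in $F$ becomes a $\sqrt{n}$-factor in the competitive ratio for the $\ell_2$ norm. As a side remark, the instance depicted in Fig.~\ref{fig: ex2} (with the number of size-$n$ jobs scaled appropriately) shows that the $O(\sqrt{n})$ bound for FCFS is asymptotically tight, mirroring the tightness discussion for SRPT following Corollary~\ref{coro: SRPTcr}.
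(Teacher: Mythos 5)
Your proof is correct and follows exactly the paper's route: invoke the optimality of FCFS for the $\ell_\infty$ norm (citing~\cite{10.5555/314613.314715}), apply the preceding proposition with $\mathcal{S}=\mathcal{FCFS}$ to get $F(\mathcal{FCFS}) \leq n\cdot F(\mathcal{OPT})$, and take square roots to obtain the $O(\sqrt{n})$ competitive ratio. Your side remark on tightness via Fig.~\ref{fig: ex2} also matches the paper's own observation following the corollary.
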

Observe that the instance shown in Fig.~\ref{fig: ex2} not only shows that FCFS may 
deteriorate average flow time, it also shows that 
the competitive ratio in Corollary~\ref{coro: FCFScr} is asymptotically tight by comparing to SRPT. 

\begin{figure}[t]
\begin{center} 
\includegraphics[width=8cm]{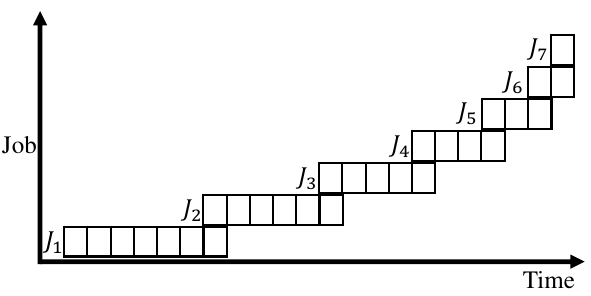} 
\caption{A lower bound instance of SJF and SETF when $n=7$.}
\label{fig: ex3}
\end{center}
\end{figure}

\subsection{Discussion on SJF and SETF.}\label{appendix: SJF}
Under SJF, the server always executes the job that has the smallest job size. 
Under SETF, the server always executes the job that has been been executed the least so far. 
Take Fig.~\ref{fig: ex1} as an example. 
Under SJF, $J_1$ is completed first (assuming that ties are broken by choosing the job with 
the smallest index), and $J_2$ has to wait until all the other jobs are completed. 
Under SETF, both $J_1$ and $J_2$ have to wait until all the other jobs are completed. 
Thus, both SJF and SETF cause job starvation.
By comparing to FCFS, Fig.~\ref{fig: ex1} shows that the competitive ratios of 
SJF and SETF for minimizing the $\ell_2$ norm of flow time are $\Omega(\sqrt{n})$. 
Next, we give an instance to show that the competitive ratios of SJF and SETF are $\Omega(n)$.
In this instance, for every job $J_i$, $p_i = n-i+1$. We set $r_1 = 1$. 
For $i \geq 2$, we set $r_i = r_{i-1}+p_{i-1}-1$. 
Fig.~\ref{fig: ex3} shows such an instance when $n = 7$. 

It is easy to see that under SJF or SETF, every job is completed after $J_n$ is released. 
Thus, for every job $J_i$ with $i \leq \frac{n}{2}$, 
its flow time under SJF or SETF is $\Omega(1+2+3+\cdots+\frac{n}{2})=\Omega(n^2)$. 
As a result, the $\ell_2$ norm of flow time of SJF or SETF is $\Omega(\sqrt{n^5})$. 
It is easy to see that for every job $J_i$, $f_i(\mathcal{FCFS}) = n$. 
Therefore, $F(\mathcal{OPT}) \leq F(\mathcal{FCFS}) = \Theta(n^3)$.
We then have the following result.

\begin{proposition}
The competitive ratios of SJF and SETF for minimizing the $\ell_2$ norm of flow time are $\Omega(n)$.
\end{proposition}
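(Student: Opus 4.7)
The release times in the instance satisfy $r_{i+1} = r_i + p_i - 1$, so each new job arrives exactly one time slot before the previous job would finish if processed without interruption. Telescoping yields $r_n - r_i = \sum_{j=i}^{n-1}(p_j-1) = \sum_{j=i}^{n-1}(n-j) = \Theta((n-i)^2)$, which is $\Omega(n^2)$ whenever $i \leq n/2$. I would use this as the key lower bound on how long each early job is forced to wait under SJF and SETF.

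\textbf{Key step: preemption dynamics.} I would prove by induction on $i$ that at time $r_{i+1}$ the newly arrived job $J_{i+1}$ preempts whatever is currently being executed, and moreover each $J_j$ with $j \leq i$ is still incomplete with exactly one unit of remaining work. For SJF the switch happens because $p_{i+1} = n-i$ is strictly smaller than $p_j$ for every previously released job. For SETF the switch happens because $J_{i+1}$ arrives with zero accumulated execution, which is strictly less than the positive accumulated execution of every $J_j$ with $j \leq i$. The ``one unit remaining'' invariant is preserved because between $r_i$ and $r_{i+1}$ only $J_i$ is processed, and it is processed for exactly $r_{i+1} - r_i = p_i - 1$ slots. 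Consequently, at time $r_n$ all of $J_1, \ldots, J_{n-1}$ are still active under both SJF and SETF, so $c_i(\mathcal{SJF}), c_i(\mathcal{SETF}) \geq r_n$ for every $i < n$, and the first paragraph gives $f_i \geq r_n - r_i = \Omega(n^2)$ for all $i \leq n/2$.

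\textbf{Finishing up.} Combining the bounds, $F(\mathcal{SJF}), F(\mathcal{SETF}) \geq (n/2) \cdot \Omega(n^4) = \Omega(n^5)$. For an upper bound on $F(\mathcal{OPT})$, I would compute the FCFS flow times directly: because each $J_{i+1}$ arrives exactly one slot before $J_i$ finishes, every job waits exactly one slot and then runs for $p_i$ slots, so $f_i(\mathcal{FCFS}) = n$ for every $i$, giving $F(\mathcal{OPT}) \leq F(\mathcal{FCFS}) = n \cdot n^2 = n^3$. The ratio of $\ell_2$-norms is therefore at least $\sqrt{\Omega(n^5)/O(n^3)} = \Omega(n)$. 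The only mildly delicate point in the argument will be the SETF induction, since SETF is defined by accumulated execution rather than job size; however, in this instance every newcomer arrives with zero execution so the preemption is automatic, and any tie-breaking in the post-$r_n$ phase contributes only an additive $O(n)$ to each flow time and does not affect the $\Omega(n^2)$ bound.
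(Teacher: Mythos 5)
Your proposal uses the same instance and the same decomposition as the paper, and it is essentially correct. The preemption-dynamics induction for SJF and SETF supplies the detail the paper glosses over with ``it is easy to see that under SJF or SETF, every job is completed after $J_n$ is released''; your no-tie observation for SETF is sound, since $J_i$ accumulates exactly $p_i - 1 = n-i$ units of execution before $J_{i+1}$ arrives, one unit short of $J_{i-1}$'s accumulated $n-i+1$, so no tie ever forms during the arrival phase. The one slip is in the FCFS justification: under FCFS the waiting time of $J_i$ is $i-1$ slots, not one slot, because the one-slot overlaps cascade. Concretely, $c_{i-1}(\mathcal{FCFS}) - r_i = n - p_{i-1} + 1 = i-1$. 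The claimed flow time $f_i(\mathcal{FCFS}) = n$ is nevertheless correct, since the processing time $p_i = n-i+1$ shrinks in exact compensation, giving $f_i = (i-1)+(n-i+1) = n$; with this small correction the $\Omega(n)$ bound follows exactly as you state.
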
 

In hindsight, the root cause of such a poor performance is that SETF and SJF may 
have poor average flow time and poor maximum flow time simultaneously.
\section{Proof of Theorem~\ref{coro: beat}}\label{appendix: coro: beat}
\begin{align*} 
&F(\mathcal{BAL}(\tilde{n}^{\frac{2}{3}})) 
= \sum_{i=1}^{n}{\left(norm_i(\tilde{n}^{\frac{2}{3}})+starv_i(\tilde{n}^{\frac{2}{3}})\right)^2}\\ 
&= 
O\left(\sum_{i=1}^{n}{norm_i(\tilde{n}^{\frac{2}{3}})^2}
+\sum_{i=1}^{n}{starv_i(\tilde{n}^{\frac{2}{3}})^2}\right)\\
&\stackrel{\text{by Theorems~\ref{thrm: algoN}~and~\ref{thrm: algoS}}}{=} 
O\left(\tilde{n}^{\frac{2}{3}} + \frac{n}{\tilde{n}^{\frac{1}{3}}}\right)F(\mathcal{OPT})
\end{align*}
Thus, the competitive ratio of 
$BAL(\tilde{n}^{\frac{2}{3}})$ for minimizing the $\ell_2$ norm of flow time is 
$O\left(\tilde{n}^{\frac{1}{3}} + n^{\frac{1}{2}}/\tilde{n}^{\frac{1}{6}}\right) 
= O\left(n^{\frac{1}{3}} \left(\alpha^{\frac{1}{3}} + \beta^{\frac{-1}{6}}\right) \right)$. 

\section{Proof of Lemma~\ref{lemma: LB2}} \label{appendix: lemma: LB2}
To derive a lower bound of $F(\mathcal{OPT})$, 
we remove some  tasks from the instance at time $t^*$
so that under $\mathcal{OPT}$, 
every job $J_i$ has exactly $m_{\mathcal{OPT}}(i)$ remaining tasks at time $t^*$. 
The above modification is achievable because $q_{t^*, \mathcal{OPT}}$ dominates $m_{\mathcal{OPT}}$.
Let $i^*$ be the smallest integer such that $m_{\mathcal{OPT}}(i^*) > 0$.
Further assume that starting from time $t^*$, 
the goal of $\mathcal{OPT}$ becomes to minimize
\begin{equation}\label{eq: newgoal}
\sum_{i= i^*}^{n}{(c'_i(\mathcal{OPT})-t^*)^2},
\end{equation}
where $c'_i(\mathcal{OPT})$ is the completion time of $J_i$ under 
$\mathcal{OPT}$ in the modified instance. 
Observe that for all $i \geq i^*$, because $q_{t^*, \mathcal{OPT}}(i) 
\geq m_{\mathcal{OPT}}(i) > 0$, 
$J_i$'s original release time is at most $t^*$.
Thus, Eq.~\eqref{eq: newgoal} is a lower bound of $F(\mathcal{OPT})$.
By Eq.~\eqref{eq: reindex}, to 
minimize Eq.~\eqref{eq: newgoal}, 
$\mathcal{OPT}$ should execute $J_{i^*}, J_{i^*+1}, \cdots, J_{n}$ 
in increasing order of their job indices. 
Thus, for any $i \geq i^*$,
$c_i'(\mathcal{OPT})-t^* = \sum_{h=i^*}^{i}{m_{\mathcal{OPT}}(h)}$, which implies Lemma~\ref{lemma: LB2}.

\section{Proof of Lemma~\ref{lemma: FRFt}} \label{appendix: lemma: FRFt}
To prove Lemma~\ref{lemma: FRFt}, 
we will consider the restriction of some map $f$ to some subset of $\dom f$. 
Specifically, for any map $f$ and any set $S \subseteq \dom f$, 
the restriction of $f$ to $S$, denoted by $f|_S$, is a map 
from $S$ to $\mathbb{N}$ such that $f|_S(i) = f(i)$ for any $i \in S$. 
The following definition adds an element $x$ and its associated output $y$ to a map $f$.
\begin{definition}
For any map $f$, any $x \notin \dom f$, 
and any $y \in \mathbb{N}$,
define $f \cup (x, y): \dom f \cup \{x\} \rightarrow \mathbb{N}$ 
as a map such that $(f \cup (x, y))(i) = f(i)$ 
if $i \in \dom f$ and $(f \cup (x, y))(x) = y$.
\end{definition}
The following definition considers the union of two disjoint maps.
\begin{definition}
For any two maps $f$ and $h$ such that 
$\dom f \cap \dom h = \varnothing$, 
define $f \cup h: \dom f \cup \dom h \rightarrow \mathbb{N}$ 
as a map such that 
$(f \cup h)|_{\dom f} = f$ and $(f \cup h)|_{\dom h} = h$.
\end{definition}

\begin{example}
Assume $f(1) = 1$, $f(2) = 2$, $h(3) = 3$, and $h(4) = 4$.
Further assume $\dom f = \{1, 2\}$ and $\dom h = \{3, 4\}$.
We then have $\dom (f \cup h) = \{1, 2, 3, 4\}$ and
$(f \cup h)(1) = 1, (f \cup h)(2) = 2, (f \cup h)(3) = 3$, 
and $(f \cup h)(4) = 4$.  
\end{example}

\begin{lemma}\label{lemma: union}
Let $f$ and $g$ be any two maps such that $S_k(f) \geq S_k(g)$ for every 
$k \in \mathbb{N}$. 
Let $x \notin (\dom f \cup \dom g)$.
Then for any $k, y \in \mathbb{N}$, 
$S_k(f \cup (x,y)) \geq S_k(g \cup (x,y))$.
\end{lemma}

\begin{proof}
Assume that in $\dom f$ (respectively, $\dom g$), there are $p_f$ 
(respectively, $p_g$) elements $i$ satisfying $f(i) \geq y$ 
(respectively, $g(i) \geq y$). 

\paragraph*{Case 1: $p_f \leq p_g$.}
	\begin{itemize}
	    \item If $k \leq p_g$, 
	    then $S_k(f \cup (x,y)) \geq S_k(f) 
        \geq S_{k}(g) = S_k(g\cup (x,y))$.
        \item If $k \geq p_g+1$, 
        then $S_k(f \cup (x,y)) 
        = S_{k-1}(f)+y \geq S_{k-1}(g)+y 
        = S_k(g\cup (x,y))$.
	\end{itemize}	 
	
\paragraph*{Case 2: $p_g < p_f$.}
	\begin{itemize}
		\item If $k \leq p_g$,
		then $S_k(f\cup (x,y)) = S_k(f) 
		\geq S_k(g) = S_k(g \cup (x,y))$.
		\item If $p_g+1 \leq k \leq p_f$,
		then $S_k(f\cup (x,y)) 
        = S_k(f) 
        \geq S_{p_g}(f)+y(k-p_g)
        \geq S_{p_g}(g)+y(k-p_g) 
        \geq S_k(g \cup (x,y))$.
        \item If $k \geq p_f+1$,
        then $S_k(f\cup (x,y)) = S_{k-1}(f)+y 
		\geq S_{k-1}(g)+y = S_k(g \cup (x,y))$.
	\end{itemize}
\end{proof}

\begin{lemma}\label{coro: topk}
Let $f$, $g$, and $h$ be any three maps such that 
$\dom f \cap \dom h = \varnothing$, 
$\dom g \cap \dom h = \varnothing$, 
and $S_k(f) \geq S_k(g), \forall k \in \mathbb{N}$.
Then $S_k(f \cup h) \geq S_k(g \cup h), \forall k \in \mathbb{N}$.
\end{lemma}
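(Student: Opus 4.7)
The plan is to reduce Lemma~\ref{coro: topk} to the single-point case already handled by Lemma~\ref{lemma: union} via induction on the cardinality of $\dom h$. Conceptually, $h$ is simply accumulated one element at a time, and each accumulation step preserves the family of top-$k$ inequalities by Lemma~\ref{lemma: union}.

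More precisely, I would proceed by induction on $|\dom h|$. The base case $|\dom h|=0$ is the empty map, for which $f \cup h = f$ and $g \cup h = g$, and the conclusion follows from the hypothesis directly. (Alternatively, one can take $|\dom h| = 1$ as the base case, which is exactly Lemma~\ref{lemma: union}.) For the inductive step, assume the statement holds whenever the third map has at most $m$ elements in its domain, and consider an $h$ with $|\dom h| = m+1$. Pick any $x \in \dom h$, set $y = h(x)$, and let $h' = h|_{\dom h \setminus \{x\}}$, so that $h = h' \cup (x,y)$ and $|\dom h'| = m$. Since $\dom h' \subseteq \dom h$, the disjointness hypotheses transfer to $h'$, and the inductive hypothesis gives $S_k(f \cup h') \geq S_k(g \cup h')$ for every $k \in \mathbb{N}$.

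Next, I would apply Lemma~\ref{lemma: union} with $f \cup h'$ in place of $f$, $g \cup h'$ in place of $g$, and the same $x$ and $y$. The side condition $x \notin \dom(f \cup h') \cup \dom(g \cup h')$ is met because $x \notin \dom f$ and $x \notin \dom g$ (by the disjointness hypothesis) and $x \notin \dom h'$ (by construction). Lemma~\ref{lemma: union} then yields
\[
S_k\bigl((f \cup h') \cup (x,y)\bigr) \geq S_k\bigl((g \cup h') \cup (x,y)\bigr)
\]
for all $k$, and since $(f \cup h') \cup (x,y) = f \cup h$ and $(g \cup h') \cup (x,y) = g \cup h$, the inductive step is complete.

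The only subtlety I anticipate is bookkeeping around the disjointness hypotheses when peeling off $x$, but this is routine: disjointness of $h'$ from $f$ and $g$ follows from disjointness of $h$, and disjointness of $x$ from $h'$ is built in. There is no analytic difficulty; all the work is carried by Lemma~\ref{lemma: union}, which already handles the single-element augmentation via the case analysis on $p_f$ versus $p_g$.
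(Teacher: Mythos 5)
Your proof is correct and is essentially the paper's own argument: induction on $|\dom h|$, peeling off one element $x$ to form $h' = h|_{\dom h \setminus \{x\}}$, invoking the inductive hypothesis on $h'$, and then applying Lemma~\ref{lemma: union} to restore $(x, h(x))$. The only cosmetic differences are that you also note the trivial $|\dom h|=0$ base case and explicitly verify the disjointness side condition for Lemma~\ref{lemma: union}, which the paper leaves implicit.
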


\begin{proof}
We prove Lemma~\ref{coro: topk} by induction on $|\dom h|$.
When $|\dom h| = 1$, the lemma holds due to Lemma~\ref{lemma: union}.
Assume the lemma holds when $|\dom h| = z$.
When $|\dom h| = z+1$, pick any $x \in \dom h$. 
Consider the map $h' = h|_{\dom h \setminus \{x\}}$.
Thus, $|\dom h'| = z$ and by the induction hypothesis, we then have
$S_k(f \cup h') \geq S_k(g \cup h')$ for any $k \in \mathbb{N}$.
By Lemma~\ref{lemma: union}, we then have 
$S_k((f \cup h') \cup (x, h(x))) \geq S_k((g \cup h') \cup (x, h(x)))$
for any $k \in \mathbb{N}$.
The proof then follows from $(f \cup h') \cup (x, h(x)) = f \cup h$ and 
$(g \cup h') \cup (x, h(x)) = g \cup h$.
\end{proof}

We are now ready to prove Lemma~\ref{lemma: FRFt}. 
The proof is by induction on $t$.
Clearly, the lemma holds when $t = 0$. 
Assume that the lemma holds when $t = \tau$. 
Thus, 
\[
S_k(q_{\tau, \mathcal{SRPT}}|_{A(\tau, \mathcal{SRPT})}) \geq 
S_k(q_{\tau, \mathcal{S}}|_{A(\tau, \mathcal{S})}),
\forall k \in \mathbb{N}.
\]
Because SRPT always executes the job with the least remaining tasks, 
and $\mathcal{S}$ is work-conserving, 
we then have 
\[
S_k(q_{\tau+1, \mathcal{SRPT}}|_{A(\tau, \mathcal{SRPT})}) \geq 
S_k(q_{\tau+1, \mathcal{S}}|_{A(\tau, \mathcal{S})}),
\forall k \in \mathbb{N}.
\]
Let $I_{\tau+1}$ be the index set of the jobs released at time $\tau+1$.
By Lemma~\ref{coro: topk}, for any $k \in \mathbb{N}$, we have 
\[
S_k(q_{\tau+1, \mathcal{SRPT}}|_{A(\tau, \mathcal{SRPT}) \cup I_{\tau+1}}) \geq 
S_k(q_{\tau+1, \mathcal{S}}|_{A(\tau, \mathcal{S}) \cup I_{\tau+1}}), 
\]
which implies 
\[
S_k(q_{\tau+1, \mathcal{SRPT}}) \geq 
S_k(q_{\tau+1, \mathcal{S}}),
\forall k \in \mathbb{N}. 
\]
Finally, because both $\mathcal{SRPT}$ and $\mathcal{S}$ are 
work-conserving, we have $S(q_{\tau+1, \mathcal{SRPT}}) = 
S(q_{\tau+1, \mathcal{S}})$, which completes the proof.

\begin{remark}
The notion of majorization is first studied by Hardy et al.~\cite{hardy}. 
Golovin et al. use majorization to study all symmetric norms 
of flow time~\cite{golovin_et_al:LIPIcs:2008:1753}. 
The original definition of majorization deals with vectors instead of maps. 
For example, in~\cite{golovin_et_al:LIPIcs:2008:1753}, the notion 
of majorization is applied to vectors consisting of the flow time of all jobs. 
In this paper, we do not directly consider the flow time. 
Instead, we consider the number of remaining tasks of active jobs. 
In addition, we consider the restriction of a map to some subset of $[1,n]$ 
(e.g., $A(\tau, \mathcal{SRPT})$). 
Thus, we use maps instead of vectors.
\end{remark}

\section{Construction of $h$}\label{appendix: construction_h}
\subsection{Freezing $q_{t^*}$ to Dominate $q_{t^*, \mathcal{SRPT}}$}
Consider a map $\hat{q}_t$ that freezes the number of remaining tasks 
of an FaS job once it becomes starving. 
Specifically, for any $t \in \mathbb{N}$ and $i \in [1,n]$, 
define 
\[
\hat{q}_t(i) = 
\begin{cases}
    q_t(i) &\parbox[t]{5.5cm}{if ($J_i$ is an FaN job) \\or ($J_i$ is an FaS job and $t \leq t_i$)}\\
    q_{t_i}(i) &\mbox{if $J_i$ is an FaS job and $t > t_i$.}
\end{cases}
\]
We stress that the definition of $\hat{q}_t(i)$ can be applied to 
$t > c_i(\mathcal{BAL})$. 
The following result is due to Line 12 of Algorithm~\ref{algo}.
\begin{lemma}\label{lemma:algfrf}
For any $t \in \mathbb{N}$,
$\hat{q}_t$ dominates $q_{t, \mathcal{SRPT}}$. 
\end{lemma}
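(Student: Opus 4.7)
The plan is to reduce Lemma~\ref{lemma:algfrf} to the elementary set-inclusion $M_{BAL}^{N}(t-1) \subseteq M_{SRPT}(t-1)$, where $M_{BAL}^{N}(t-1)$ denotes the set of tasks executed as \emph{normal} (as opposed to starving) by $BAL(\theta)$ in the time slots $[0], [1], \ldots, [t-1]$. Once this inclusion is in hand, everything else is bookkeeping.

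First I would introduce the quantity $n_t(i) = |M_{BAL}^{N}(t-1) \cap \{J_{i,1},\ldots,J_{i,p_i}\}|$, the number of normal tasks of $J_i$ processed by $BAL$ before time $t$, and verify the identity $\hat{q}_t(i) = p_i - n_t(i)$ for every $i \in [1,n]$ by a short case analysis on the definition of $\hat{q}_t$. If $J_i$ is an FaN job, every task $BAL$ executes is normal, so $\hat{q}_t(i) = q_t(i) = p_i - n_t(i)$. If $J_i$ is an FaS job with $t \leq t_i$, no starving tasks of $J_i$ have yet been processed, and the same identity holds. If $J_i$ is an FaS job with $t > t_i$, then $J_i$ remains starving until completion, so no further normal tasks of $J_i$ are processed after $t_i$; hence $n_t(i) = n_{t_i}(i)$, which gives $\hat{q}_t(i) = q_{t_i}(i) = p_i - n_{t_i}(i) = p_i - n_t(i)$.

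Since $q_{t,\mathcal{SRPT}}(i)$ equals $p_i$ minus the number of tasks of $J_i$ in $M_{SRPT}(t-1)$ (capped at $0$), the desired domination $\hat{q}_t(i) \geq q_{t,\mathcal{SRPT}}(i)$ reduces to $n_t(i) \leq |M_{SRPT}(t-1) \cap \{J_{i,1},\ldots,J_{i,p_i}\}|$, which is an immediate consequence of the inclusion $M_{BAL}^{N}(t-1) \subseteq M_{SRPT}(t-1)$. The inclusion itself is a direct reading of the fallback rule in Algorithm~\ref{algo}: whenever $BAL$ executes a normal task in some slot $[\tau]$ with $\tau \leq t-1$, it selects that task from $M_{SRPT}(\tau) \setminus M_{BAL}(\tau-1) \subseteq M_{SRPT}(t-1)$. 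There is essentially no obstacle here---the only delicate point is confirming $\hat{q}_t(i) = p_i - n_t(i)$ across the boundary $t = t_i$ and for jobs already completed by $BAL$ (where the identity still holds because no new normal tasks can be added), after which the lemma drops out of the definition of the algorithm.
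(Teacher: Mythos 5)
Your proposal is correct and takes essentially the same approach as the paper: both hinge on the observation (from Line~14 of Algorithm~\ref{algo}) that every task $BAL$ executes as a normal task is already in $M_{SRPT}$, and you merely repackage the paper's two-case argument into the cleaner uniform identity $\hat{q}_t(i) = p_i - n_t(i)$. One tiny caveat worth noting: that identity fails for jobs not yet released at time $t$ (there $q_t(i)=0$ while $p_i-n_t(i)=p_i$), but for such $i$ domination holds trivially since $\hat{q}_t(i) = q_{t,\mathcal{SRPT}}(i) = 0$, so the lemma is unaffected.
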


\begin{proof}
It suffices to show that for all $i \in [1, n]$, $q_{t, \mathcal{SRPT}}(i) \leq \hat{q}_t(i)$.
\paragraph*{Case 1: $J_i$ is an FaN job or $J_i$ is an FaS job with $t \leq t_i$.}
In this case, by Line 12 of Algorithm~\ref{algo}, 
for any $t' < t$, if $\mathcal{BAL}$ executes a task $J_{i,k}$ in time slot $[t']$,  
then it must be the case that $\mathcal{SRPT}$ executes $J_{i,k}$ in some
time slot $[t'']$ with $t'' \leq t'$.
Thus, $q_{t, \mathcal{SRPT}}(i) \leq q_t(i) = \hat{q}_t(i)$.
\paragraph*{Case 2: $J_i$ is an FaS job with $t > t_i$.}
In this case, we have 
$\hat{q}_t(i) 
=    q_{t_i}(i) 
\geq q_{t_i, \mathcal{SRPT}}(i) 
\geq q_{t,   \mathcal{SRPT}}(i)$.
\end{proof}

\subsection{Truncating $\hat{q}_{t^*}$ to Majorize $q_{t^*, \mathcal{SRPT}}$}\label{subsubsec: TR}
Because $S(\hat{q}_{t^*})$ may be greater than $q(t^*) = S(q_{t^*, \mathcal{SRPT}})$, $\hat{q}_{t^*}$ 
may not majorize $q_{t^*, \mathcal{SRPT}}$. 
Thus, we construct a map $h$ by decreasing the smallest non-zero outputs of $\hat{q}_{t^*}$ 
so that $S(h) = q(t^*)$. 
We call this operation the truncation of $\hat{q}_{t^*}$ at $q(t^*)$. 

\begin{definition}\label{defi: truncate}
Let $f$ be any map such that $S(f) \geq 1$. 
Let $c$ be any integer in $[1, S(f)]$.
Let $lu(f,c)$ be the least integer such that $S_{lu(f,c)}(f) \geq c$.
The \textbf{truncation} of $f$ at $c$, denoted by $TR[f, c]$, 
is a map with domain $\dom f$ such that
\begin{align*}
  TR[f, c](\pi_f(k))
  =
    \begin{cases}
      f(\pi_f(k)) & \mbox{if $k \in [1, lu(f,c)-1]$}\\
      c - S_{k-1}(f) & \mbox{if $k = lu(f,c)$}\\
      0 & \mbox{if $k \in [lu(f,c)+1, |\dom f|]$}
    \end{cases}       
\end{align*}
Finally, $TR[f, c]$ is said to be a valid truncation 
if $c \in [1, S(f)]$.
\end{definition}

\begin{example}
Assume $f(i) = 10i$ and $\dom f = [1, 10]$.
We then have $S_3(f) = 100+90+80 = 270$ and
$S_4(f) = 100+90+80+70 = 340$.
If $c = 300$, then $lu(f,c) = 4$, 
$TR[f, c](10) = 100, TR[f, c](9) = 90, 
TR[f, c](8) = 80$, and $TR[f, c](7) = 30$.
For all $i \in [1, 6]$, $TR[f, c](i) = 0$.
\end{example}

Clearly, we have the following result.
\begin{lemma}\label{fact: propTR}
Let $TR[f, c]$ be a valid truncation. 
Then $S(TR[f, c]) = c$ and $TR[f, c]$ is dominated by $f$. 
\end{lemma}

\begin{lemma} \label{lemma: TR123} 
Let $h = TR[\hat{q}_{t^*}, q(t^*)]$.
Then $h$ majorizes $q_{t^*, \mathcal{SRPT}}$.
\end{lemma}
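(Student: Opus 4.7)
The plan is to verify the two majorization conditions directly. The first condition, $S(h) = S(q_{t^*, \mathcal{SRPT}})$, is immediate: by Fact~\ref{fact: propTR}, $S(h) = q(t^*)$ once I check the truncation is valid, i.e. $1 \leq q(t^*) \leq S(\hat{q}_{t^*})$. The lower bound $q(t^*) \geq u(t^*) \geq 1$ holds because we have already assumed $u(t^*) \geq 1$ in Section~\ref{subsec: overview}, and the upper bound holds because Lemma~\ref{lemma:algfrf} gives $\hat{q}_{t^*}(i) \geq q_{t^*, \mathcal{SRPT}}(i)$ pointwise, so $S(\hat{q}_{t^*}) \geq S(q_{t^*, \mathcal{SRPT}}) = q(t^*)$. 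Since any work-conserving schedule has the same total remaining tasks at each time, $S(q_{t^*, \mathcal{SRPT}}) = q(t^*) = S(h)$.

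For the second condition $S_k(h) \geq S_k(q_{t^*, \mathcal{SRPT}})$ for all $k \in \mathbb{N}$, the first intermediate step is to lift the pointwise dominance of Lemma~\ref{lemma:algfrf} to a top-$k$ dominance: if $f(i) \geq g(i)$ for every $i$ in a common domain, then for any $k$, picking the top-$k$ indices $I$ of $g$ gives $S_k(g) = \sum_{i \in I} g(i) \leq \sum_{i \in I} f(i) \leq S_k(f)$, and when $k$ exceeds the domain size the inequality $S(f) \geq S(g)$ finishes the argument. Applying this yields $S_k(\hat{q}_{t^*}) \geq S_k(q_{t^*, \mathcal{SRPT}})$ for every $k$.

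The second intermediate step is to relate $S_k(h)$ to $S_k(\hat{q}_{t^*})$ by exploiting the structure of truncation. Let $lu = lu(\hat{q}_{t^*}, q(t^*))$. For $k \leq lu - 1$, the top-$k$ outputs of $h$ coincide with the top-$k$ outputs of $\hat{q}_{t^*}$ (same indices $\pi_{\hat{q}_{t^*}}(1), \ldots, \pi_{\hat{q}_{t^*}}(k)$ with unchanged values), so $S_k(h) = S_k(\hat{q}_{t^*}) \geq S_k(q_{t^*, \mathcal{SRPT}})$. For $k \geq lu$, truncation has already saturated $h$'s total mass, giving $S_k(h) = S(h) = q(t^*) = S(q_{t^*, \mathcal{SRPT}}) \geq S_k(q_{t^*, \mathcal{SRPT}})$. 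Combining the two ranges covers all $k$.

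The main technical obstacle I anticipate is bookkeeping around the orderings: making sure that the top-$k$ permutation $\pi_h$ really does agree with $\pi_{\hat{q}_{t^*}}$ on the first $lu-1$ positions (which follows because truncation only decreases values at positions $\pi_{\hat{q}_{t^*}}(lu), \pi_{\hat{q}_{t^*}}(lu+1), \ldots$, leaving the top-$(lu-1)$ values strictly larger than any position that was zeroed out), and checking the boundary case $k = lu$ cleanly. These are routine verifications once the split at $lu$ is in place, so I expect no genuine difficulty beyond careful indexing.
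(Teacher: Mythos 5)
Your proof is correct and follows essentially the same two-case split at $lu(\hat{q}_{t^*}, q(t^*))$ that the paper uses: for $k < lu$ reduce to $S_k(\hat{q}_{t^*}) \geq S_k(q_{t^*,\mathcal{SRPT}})$ via Lemma~\ref{lemma:algfrf}, and for $k \geq lu$ use saturation of $S_k(h)$ at $q(t^*)$. The one place you are more explicit than the paper is in spelling out the lift from the pointwise dominance of Lemma~\ref{lemma:algfrf} to top-$k$ dominance, which the paper invokes without comment; that is a welcome bit of added rigor, not a different approach.
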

\begin{proof}
First note that because $q(t^*) \geq u(t^*) \geq 1$ and $q(t^*) \leq S(\hat{q}_{t^*})$, 
$TR[\hat{q}_{t^*}, q(t^*)]$ is a valid truncation. 
By Lemma~\ref{fact: propTR}, 
$S(TR[\hat{q}_{t^*}, q(t^*)]) = q(t^*) = S(q_{t^*, \mathcal{SRPT}})$. 
It is then sufficient to prove 
\begin{equation}\label{eq: TRSMgoal}
     S_k(TR[\hat{q}_{t^*}, q(t^*)]) 
\geq S_k(q_{t^*, \mathcal{SRPT}}), \forall k \in \mathbb{N}.
\end{equation} 
To prove Eq.~\eqref{eq: TRSMgoal}, 
we consider the following two cases.

\paragraph*{Case 1: $k \leq lu(\hat{q}_{t^*},q(t^*))-1$.}
The case where $k = 0$ is trivial. Thus, we assume 
$1 \leq k \leq lu(\hat{q}_{t^*},q(t^*))-1$.
By the definition of truncation, we have 
\begin{equation}\nonumber
S_k(TR[\hat{q}_{t^*}, q(t^*)]) = S_k(\hat{q}_{t^*}), 
\forall k \in [1, lu(\hat{q}_{t^*},q(t^*))-1].
\end{equation}
The proof then follows from Lemma~\ref{lemma:algfrf}. 
Specifically, by Lemma~\ref{lemma:algfrf}, we have 
\begin{equation}\nonumber
S_k(\hat{q}_{t^*}) \geq S_k(q_{t^*, \mathcal{SRPT}}), 
\forall k \in [1, n].
\end{equation}

\paragraph*{Case 2: $k \geq lu(\hat{q}_{t^*},q(t^*))$.}
In this case, we have
$S_k(TR[\hat{q}_{t^*}, q(t^*)]) = q(t^*) 
\geq S_k(q_{t^*, \mathcal{SRPT}})$.
\end{proof}
 
\section{Construction of $h'$} \label{appendix: construction_h'} 

\begin{lemma} \label{lemma: mbal}
Let $h = TR[\hat{q}_{t^*}, q(t^*)]$. 
There is a $1$-proper map $h'$ dominated by $h$.
\end{lemma}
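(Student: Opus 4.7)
The plan is to follow the construction from the proof sketch and verify the two inequalities that define a $1$-proper map. Define the index set
\[
I_S = \{ i \in [1,n] \mid J_i \text{ is an FaS job and } t_i \leq t^*\},
\]
and let $I_N = [1,n] \setminus I_S$. The candidate map is
\[
h'(i) = \begin{cases} h(i) & \text{if } i \in I_S,\\ 0 & \text{if } i \in I_N. \end{cases}
\]
By construction $h'$ is dominated by $h$, so what remains is checking (i) $S(h'^2) \leq \sum_{i=1}^n q_{t_i}(i)^2$, and (ii) $S(h') \geq u(t^*)$.

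For (i), I will use Fact~\ref{fact: propTR}, which guarantees $h$ is dominated by $\hat{q}_{t^*}$. Because every $i \in I_S$ is an FaS job with $t_i \leq t^*$, the definition of $\hat{q}_{t^*}$ gives $\hat{q}_{t^*}(i) = q_{t_i}(i)$. Hence
\[
S(h'^2) = \sum_{i \in I_S} h(i)^2 \leq \sum_{i \in I_S} \hat{q}_{t^*}(i)^2 = \sum_{i \in I_S} q_{t_i}(i)^2 \leq \sum_{i=1}^{n} q_{t_i}(i)^2,
\]
where the last inequality uses $q_{t_i}(i) = 0$ for FaN jobs.

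For (ii), I will compute $S(h')$ by peeling off the contribution from $I_N$. Fact~\ref{fact: propTR} also gives $S(h) = q(t^*)$. For every $i \in I_N$ (either an FaN job or an FaS job with $t_i > t^*$), the definition of $\hat{q}_{t^*}$ yields $\hat{q}_{t^*}(i) = q_{t^*}(i)$, so $h(i) \leq q_{t^*}(i)$. Therefore
\[
S(h') = S(h) - \sum_{i \in I_N} h(i) \geq q(t^*) - \sum_{i \in I_N} q_{t^*}(i) = \sum_{i \in I_S} q_{t^*}(i).
\]
The remaining step, which I expect to be the only real subtlety, is identifying $\sum_{i \in I_S} q_{t^*}(i)$ with $u(t^*)$. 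The argument is that a task is starving at time $t^*$ exactly when it belongs to a job that has become starving by $t^*$ and is still active at $t^*$; these are precisely the indices $i \in I_S$ with $q_{t^*}(i) > 0$ (indices in $I_S$ whose jobs have already completed contribute $q_{t^*}(i) = 0$, so they may safely be included in the sum). Hence $\sum_{i \in I_S} q_{t^*}(i) = u(t^*)$, which yields $S(h') \geq u(t^*)$ and completes the proof.
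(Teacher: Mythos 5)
Your proof is correct and follows essentially the same route as the paper's: same choice of $I_S$ and $h'$, same use of Fact~\ref{fact: propTR} and the identity $\hat{q}_{t^*}(i)=q_{t_i}(i)$ on $I_S$ for part~(i), and the same peel-off-$I_N$ computation plus the observation $u(t^*)=\sum_{i\in I_S}q_{t^*}(i)$ for part~(ii). The only added touch is your explicit remark that completed jobs in $I_S$ contribute $q_{t^*}(i)=0$, which the paper leaves implicit.
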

\begin{proof}
Define 
\[
I_S = \{i | \text{$J_i$ is an FaS job and $t_i \leq t^*$}\}
\] 
and 
\[
I_N = \{i | \text{$J_i$ is an FaN job or $t_i > t^*$}\}.
\] 
Observe that for any $i \in I_N$, we have $\hat{q}_{t^*}(i) = q_{t^*}(i)$, 
and for any $i \in I_S$, we have $\hat{q}_{t^*}(i) = q_{t_i}(i)$.
Define
\begin{equation}\nonumber
  h'(i)=
    \begin{cases}
      h(i) &\mbox{if } i \in I_S\\
      0                             &\mbox{if } i \in I_N.\\
    \end{cases}       
\end{equation} 
Clearly, $h'$ is dominated by $h$.  
In addition, 
\[
    S(h'^2) 
=    \sum_{i \in I_S}{h(i)^2}
\leq \sum_{i \in I_S}{\hat{q}_{t^*}(i)^2}
=    \sum_{i \in I_S}{q_{t_i}(i)^2}
\leq \sum_{i=1}^{n}{q_{t_i}(i)^2}.
\]
Observe that the number of remaining normal tasks at time $t^*$ is $\sum_{i \in I_N}{q_{t^*}(i)}$. 
Thus, $\sum_{i \in I_N}{q_{t^*}(i)} + u(t^*) = q(t^*)$. As a result, 
\begin{align*}
S(h') &=      S(h)       - \sum_{i \in I_N}{h(i)} 
      \geq   S(h)       - \sum_{i \in I_N}{\hat{q}_{t^*}(i)} 
      =    S(h)       - \sum_{i \in I_N}{q_{t^*}(i)} \\
      &=      q(t^*)     - \sum_{i \in I_N}{q_{t^*}(i)} 
       =      u(t^*).
\end{align*}
Therefore, $h'$ is a $1$-proper map dominated by $h$.
\end{proof}

\section{Proof of Lemma~\ref{lemma: S1S2}}\label{appendix: lemma: S1S2}
We will prove the following more general result.
\begin{lemma}\label{lemma: _S1S2}
Let $f$ and $g$ be such that $f$ majorizes $g$. 
For any map $f'$ dominated by $f$, 
there is a map $g'$ dominated by $g$ such that $S(g'^2) \leq S(f'^2)$ 
and $S(g') \geq \frac{1}{4}S(f')$.
\end{lemma}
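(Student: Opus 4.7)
The plan is to invoke the Hardy--Littlewood--P\'olya theorem, which characterizes majorization as doubly-stochastic mixing, and then to apply that same mixing to $f'$ in order to obtain the desired $g'$. This approach sidesteps any explicit combinatorial ``matching'' between $\dom f'$ and $\dom g$ and delivers all three required conditions in one stroke; in fact it will yield the equality $S(g') = S(f')$, which is strictly stronger than the $\tfrac{1}{4}$ factor asked for.

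Concretely, I would first pad the domains of $f$ and $g$ with auxiliary indices of value $0$ until they share a common index set of some size $N$. Because $f$ majorizes $g$, the padded sorted vectors satisfy the same majorization, so the Hardy--Littlewood--P\'olya theorem furnishes an $N \times N$ doubly-stochastic matrix $P$ such that $g(\pi_g(j)) = \sum_{i} P_{ji}\, f(\pi_f(i))$ for every $j$. I would then define $g'$ by the same mixing applied to $f'$: set $g'(\pi_g(j)) = \sum_{i} P_{ji}\, f'(\pi_f(i))$, treating $f'$ as $0$ on padded indices. The required properties then become routine checks: $g' \leq g$ pointwise, since $P \geq 0$ and $f' \leq f$; $S(g') = S(f')$, because the columns of $P$ sum to $1$; and $S(g'^2) \leq S(f'^2)$ by Jensen's inequality applied to the convex function $x \mapsto x^2$, using that each row of $P$ is a probability distribution and then summing over $j$ with the column-sum identity.

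The main obstacle I expect is simply the appeal to the Hardy--Littlewood--P\'olya theorem, which the paper's own sketch sidesteps by insisting on a ``purely combinatorial'' construction. An alternative plan in that spirit would be to process $\dom f'$ in decreasing order of $f'(i)$ and greedily build each $B_i$ from the currently unused elements of $\dom g$: preferring a single element with $g(b) \in [f'(i)/2, f'(i)]$, then a batch of ``small'' elements, and falling back to a single ``large'' element (with $g'(b)$ reduced down to $f'(i)$) only when forced. The delicate part of such an approach is bounding the $g$-mass ``wasted'' in that last case and showing that the selected $I$ still captures at least half of $S(f')$; I expect that this is exactly where the full majorization hypothesis (rather than merely $S(f) = S(g)$) is needed, to rule out configurations in which $g$'s mass is trapped in elements much larger than any $f'(i)$.
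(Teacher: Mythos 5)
Your HLP-based argument is correct, and it in fact proves a stronger statement than the lemma asks for. With $f$ majorizing $g$ (after padding both to a common length by appending zeros, which changes nothing), the Hardy--Littlewood--P\'olya theorem gives a doubly stochastic $P$ with $g = Pf$, and setting $g' = Pf'$ does everything at once: pointwise domination $g' \le g$ from $P \ge 0$ and $f' \le f$, the exact equality $S(g') = S(f')$ from the column sums, and $S(g'^2) \le S(f'^2)$ from Jensen's inequality row by row followed by the column-sum identity. The auxiliary padded indices in $\dom g$ receive value $0$ in both $g$ and $g'$, so restricting to the true $\dom g$ loses nothing. By contrast, the paper's proof is purely combinatorial: it sorts $\dom f$ against $\dom g$, splits $\dom f$ into the part $I_1$ where $f'$ fits under the matched $g$-value and the part $I_2$ where it does not, constructs a candidate $g_1$ on $I_1$ by direct copying and $g_2$ on $I_2$ by greedily packing disjoint runs of $g$-values each capturing between $\tfrac12 f'(i)$ and $f'(i)$, and takes whichever of $g_1, g_2$ carries more of $S(f')$; the factor $\tfrac14$ arises from one factor of $2$ in the packing guarantee and one from choosing the better half. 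What the paper's approach buys is that it never leaves the integers: $g_1$ and $g_2$ are built only from values of $f'$ and $g$, so the output map is integer-valued, which matters because the downstream proof of Lemma~\ref{lemma: LB2} is phrased as ``remove tasks so that $J_i$ has exactly $m_{\mathcal{OPT}}(i)$ remaining tasks,'' implicitly requiring integrality. Your $g'$ is generally real-valued. The lemma statement itself only speaks of ``maps'' (non-negative functions), so your proof is valid as stated, and the lower-bound argument in Lemma~\ref{lemma: LB2} can be rephrased to work with real ``processing times'' without changing its substance---but if one insists on keeping the paper's exact phrasing, the combinatorial construction is the one that produces an object of the right form. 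Your described combinatorial fallback is in the spirit of the paper's argument, and the delicate point you flag---bounding the $g$-mass wasted when matching a large $f'(i)$---is precisely what the paper's two-case split and the $S_k$ majorization inequalities resolve.
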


For any map $f$, we use $f^+$ to denote the restriction 
of $f$ to $\{i| f(i) > 0\}$.
Let $d_{f} = |\dom f^+|$ and $d_{g} = |\dom g^+|$.
Because $f$ majorizes $g$, we have 
\begin{equation}\nonumber
S_{d_g}(g^+) 
= S(g^+)=S(f^+)>  
S_{d_{f}-1}(f^+)
\geq S_{d_{f}-1}(g^+).
\end{equation}
Therefore, $d_g > d_f-1$ and thus  
\begin{equation}\label{eq: dgdf}
d_{g} \geq d_{f}. 
\end{equation}
We write $\dom f$ as $\{a_1, a_2, \cdots, a_{|\dom f|}\}$ such that 
$f(a_1) \geq f(a_2) \geq f(a_3) \geq f(a_{|\dom f|})$. 
Similarly, we write $\dom g$ as $\{b_1, b_2, \cdots, b_{|\dom g|}\}$ such that 
$g(b_1) \geq g(b_2) \geq g(b_3) \geq g(b_{|\dom g|})$. 
We divide $\dom f$ into three sets, $I_0$, $I_1$, and $I_2$,
where 
\begin{align*}
I_0 &= \dom f \setminus \dom f^+ = \{a_k| k \in [d_f+1, |\dom f|]\},\\
I_1 &= \{a_k | f'(a_k) \leq g(b_k), k \in [1, d_{f}]\}, \text{ and }\\
I_2 &= \{a_k | f'(a_k) >    g(b_k), k \in [1, d_{f}]\}.
\end{align*}
Note that by Eq.~\eqref{eq: dgdf}, $b_k$ exists for any $k \in [1, d_{f}]$.
The proof proceeds as follows: 
For $I_1$ (respectively, $I_2$), we will construct a map $g_1$ (respectively, $g_2$) that is dominated by $g$. 
If $\sum_{i \in I_1}{f'(i)} \geq \sum_{i \in I_2}{f'(i)}$, we set $g' = g_1$.
Otherwise, we set $g'=g_2$. 

\subsection{Construction and Properties of $g_1$} 
If $I_1 \neq \varnothing$, we construct a map $g_1$ dominated by $g$.
Initially, $g_1(i) = 0$ for all $i \in \dom g$.
For each $a_k \in I_1$, set $g_1(b_k)$ as $f'(a_k)$. 
Thus, $g_1$ is dominated by $g$.
In addition, we have 
\begin{equation}\label{eq: m1B}
S(g_1^2) = \sum_{a_k \in I_1}{f'(a_k)^2}
\end{equation}
end
\begin{equation}\label{eq: m1A}
S(g_1) = \sum_{a_k \in I_1}{f'(a_k)}.
\end{equation}

\subsection{Construction of $g_2$} 
If $I_2 \neq \varnothing$, we construct another map $g_2$ dominated by $g$.
Initially, $g_2(i) = 0$ for any $i \in \dom g$. 
We rewrite $I_2$ as $\{a_{\kappa(1)}, a_{\kappa(2)}, \cdots, a_{\kappa(|I_2|)}\}$ so that 
$\kappa(1) \leq \kappa(2) \leq \cdots \leq \kappa(|I_2|)$. 
For brevity, for any positive integers $x$ and $y$ with 
$x \leq y \leq d_g$, define 
\begin{equation}\nonumber
S_{x,y}(g) = \sum_{k \in [x,y]}{g(b_k)}.
\end{equation}
Define $y(0) = \kappa(1)-1$.
The construction of $g_2$ proceeds in rounds.
In the $j$th ($j \in [1, |I_2|]$) round, we set 
\begin{equation} \label{eq: xim2}
x(j) = \max{(\kappa(j), y(j-1)+1)}.
\end{equation} 
We set $y(j)$ to be the smallest integer such that 
\begin{equation} \label{eq: yim2}
S_{x(j),y(j)}(g) \geq \frac{f'(a_{\kappa(j)})}{2}.
\end{equation}
We then set 
\begin{equation} \label{eq: m2setting}
g_2(b_k) = g(b_k), \forall k \in [x(j), y(j)]. 
\end{equation}

\begin{figure}[t]
\begin{center} 
\includegraphics[width=8cm]{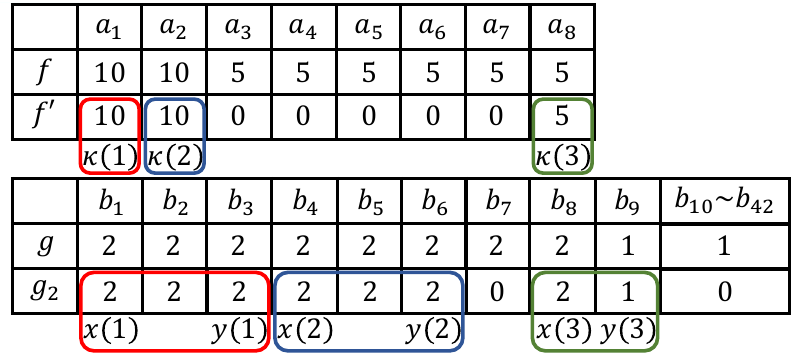} 
\caption{An example of $g_2$.}
\label{fig: exmajor}
\end{center}
\end{figure}

\begin{example}
Consider the maps $f$, $f'$, and $g$ shown in Fig.~\ref{fig: exmajor}. 
We then have $I_2 = \{a_1, a_2, a_8\}$. Thus, $\kappa(1) = 1, \kappa(2) = 2, \kappa(3) = 8$. 
\begin{enumerate}
\item In Round 1, we set $x(1) = \kappa(1) = 1$ and $y(1) = 3$. 
Observe that $f'(a_{\kappa(1)}) = 10$ and $S_{x(1),y(1)}(g) = 6$. Thus, 
\[
\frac{f'(a_{\kappa(1)})}{2} \leq S_{x(1),y(1)}(g) < f'(a_{\kappa(1)}).
\]
\item In Round 2, we set $x(2) = y(1)+1 = 4$ and $y(2) = 6$.
Observe that $f'(a_{\kappa(2)}) = 10$ and $S_{x(2),y(2)}(g) = 6$. Thus,
\[
\frac{f'(a_{\kappa(2)})}{2} \leq S_{x(2),y(2)}(g) < f'(a_{\kappa(2)}).
\]
\item In Round 3, we set $x(3) = \kappa(3) = 8$ and $y(3) = 9$.
Observe that $f'(a_{\kappa(3)}) = 5$ and $S_{x(3),y(3)}(g) = 3$. Thus, 
\[
\frac{f'(a_{\kappa(3)})}{2} \leq S_{x(3),y(3)}(g) < f'(a_{\kappa(3)}).
\]
\end{enumerate}
\end{example}

\subsection{Correctness of the Construction of $g_2$}
To prove the above construction is correct, it suffices to prove that in every round $j$, 
$x(j)$ is valid (i.e., $x(j) \leq d_g$) and 
$y(j)$ is valid (i.e., there exists $y(j)$ that satisfies Eq.~\eqref{eq: yim2}). 
We first prove the following lemma.
\begin{lemma}\label{cl:xyleqf}
If $x(j)$ and $y(j)$ are valid,  
then $S_{x(j), y(j)}(g) < f'(a_{\kappa(j)})$.
\end{lemma}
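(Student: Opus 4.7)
The plan is to split on whether $y(j) = x(j)$ or $y(j) > x(j)$, and in each case use only (i) the decreasing ordering $g(b_1) \geq g(b_2) \geq \cdots$, (ii) the fact that $a_{\kappa(j)} \in I_2$ so $f'(a_{\kappa(j)}) > g(b_{\kappa(j)})$, and (iii) the minimality of $y(j)$.

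First I would handle the trivial case $y(j) = x(j)$, where $S_{x(j),y(j)}(g)$ reduces to the single term $g(b_{x(j)})$. Since $x(j) \geq \kappa(j)$ by Eq.~\eqref{eq: xim2} and $g(b_k)$ is non-increasing in $k$, we get $g(b_{x(j)}) \leq g(b_{\kappa(j)})$, and the membership $a_{\kappa(j)} \in I_2$ then yields $g(b_{\kappa(j)}) < f'(a_{\kappa(j)})$. Chaining these gives the claim.

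Next I would handle $y(j) > x(j)$. By the minimality of $y(j)$ satisfying Eq.~\eqref{eq: yim2}, we have $S_{x(j),y(j)-1}(g) < f'(a_{\kappa(j)})/2$, so peeling off the last term gives
\[
S_{x(j),y(j)}(g) \;<\; \tfrac{f'(a_{\kappa(j)})}{2} \;+\; g(b_{y(j)}).
\]
The key observation is that the trailing term is itself at most $f'(a_{\kappa(j)})/2$: since $y(j) > x(j)$, monotonicity gives $g(b_{y(j)}) \leq g(b_{x(j)})$, and since $x(j) \leq y(j)-1$, the single term $g(b_{x(j)})$ is bounded by the partial sum $S_{x(j),y(j)-1}(g)$, which is strictly less than $f'(a_{\kappa(j)})/2$ by the minimality invoked above. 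Substituting back gives $S_{x(j),y(j)}(g) < f'(a_{\kappa(j)})$.

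The proof is essentially mechanical once the small chaining trick $g(b_{y(j)}) \leq g(b_{x(j)}) \leq S_{x(j),y(j)-1}(g) < f'(a_{\kappa(j)})/2$ is spotted; I do not foresee any real obstacle, only the need to be careful that $x(j) \leq y(j)-1$ in the second case so the single term really is dominated by the partial sum.
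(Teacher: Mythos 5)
Your proof is correct and follows essentially the same approach as the paper. The paper's version is stated more compactly (it establishes $S_{x(j),x(j)}(g) < f'(a_{\kappa(j)})$ as a base and then observes the one-step implication from $S_{x(j),h}(g) < f'(a_{\kappa(j)})/2$ to $S_{x(j),h+1}(g) < f'(a_{\kappa(j)})$), but the underlying chaining $g(b_{y(j)}) \leq g(b_{x(j)}) \leq S_{x(j),y(j)-1}(g) < f'(a_{\kappa(j)})/2$ that you spell out is exactly the observation needed to justify the paper's one-step claim; you have simply made the two cases and the nonnegativity/monotonicity reasoning explicit.
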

\begin{proof} 
First, we have
\begin{equation}\nonumber
f'(a_{\kappa(j)})
\stackrel{\text{by $a_{\kappa(j)} \in I_2$}}{>} 
g(b_{\kappa(j)})
\stackrel{\text{by Eq.~\eqref{eq: xim2}}}{\geq}
g(b_{x(j)}).
\end{equation}
Thus, $S_{x(j), x(j)}(g) < f'(a_{\kappa(j)})$.
Moreover, for any $h \in [1, |\dom g|-1]$, we have $g(b_{h}) \geq g(b_{h+1})$.
Thus, if $S_{x(j), h}(g) < f'(a_{\kappa(j)})/2$, then $S_{x(j), h+1}(g) < f'(a_{\kappa(j)})$. 
The proof then follows from the definition of $y(j)$.
\end{proof}

Next, we prove that the following statements 
$\mathcal{X}(j)$ and $\mathcal{Y}(j)$ hold for any 
$j \in [1, |I_2|]$ by induction on $j$:
\begin{description}
\item[$\mathcal{X}(j)$:] $x(j) \leq d_g$.
\item[$\mathcal{Y}(j)$:] $S_{x(j), d_g}(g) \geq \sum_{h=j}^{|I_2|}{f'(a_{\kappa(h)})}$.
\end{description}
Observe that $\mathcal{X}(j)$ and $\mathcal{Y}(j)$ imply that 
$x(j)$ and $y(j)$ are valid, respectively.

When $j = 1$, $x(1) = \kappa(1) \leq d_f \leq d_g$. Thus, $\mathcal{X}(1)$ 
holds. In addition, 
\begin{align*}
&S_{x(1), d_g}(g) = S_{\kappa(1), d_g}(g) = 
S(g) - S_{\kappa(1)-1}(g)
\geq
S(f) - S_{\kappa(1)-1}(f)\\
&=    \sum_{h = \kappa(1)}^{d_f}{f(a_h)}
\geq \sum_{h = \kappa(1)}^{d_f}{f'(a_h)} 
\geq \sum_{h=1}^{|I_2|}{f'(a_{\kappa(h)})}.
\end{align*}
Thus, $\mathcal{Y}(1)$ holds.

Assume $\mathcal{X}(j)$ and $\mathcal{Y}(j)$ hold when $j = z$ 
for some $z \in [1, |I_2|-1]$. 
To prove $\mathcal{X}(z+1)$ and $\mathcal{Y}(z+1)$ hold, 
we first consider the case where $\kappa(z+1) \geq y(z)+1$.
In this case, $x(z+1) = \kappa(z+1) \leq d_f \leq d_g$. 
Thus, $\mathcal{X}(z+1)$ holds.
In addition, 
\begin{align*}
&S_{x(z+1), d_g}(g) = S_{\kappa(z+1), d_g}(g) = S(g) - S_{\kappa(z+1)-1}(g)
\geq
S(f) - S_{\kappa(z+1)-1}(f)\\
&=    \sum_{h = \kappa(z+1)}^{d_f}{f(a_h)} 
\geq \sum_{h = \kappa(z+1)}^{d_f}{f'(a_h)}
\geq \sum_{h=z+1}^{|I_2|}{f'(a_{\kappa(h)})}.
\end{align*}
Thus, $\mathcal{Y}(z+1)$ holds.

Next, we consider the case where 
$y(z)+1 > \kappa(z+1)$. Thus, $x(z+1) = y(z)+1$. 
We have
\begin{align}
\nonumber      &S_{x(z), d_g}(g)-S_{x(z), y(z)}(g) 
\stackrel{\text{by $\mathcal{Y}(z)$}}{\geq} 
\sum_{h=z}^{|I_2|}{f'(a_{\kappa(h)})} - S_{x(z), y(z)}(g)\\
\label{eq: xyvalid} 
&\stackrel{\text{by Lemma~\ref{cl:xyleqf}}}{>} 
  \sum_{h=z}^{|I_2|}{f'(a_{\kappa(h)})} - f'(a_{\kappa(z)})
= \sum_{h=z+1}^{|I_2|}{f'(a_{\kappa(h)})}.
\end{align}
Because $a_{\kappa(z+1)} \in I_2$, $f'(a_{\kappa(z+1)}) > 0$. 
By Eq.~\eqref{eq: xyvalid}, 
$S_{x(z), d_g}(g)-S_{x(z), y(z)}(g) > 0$.
As a result, $d_g \geq y(z)+1=x(z+1)$ and 
\[
S_{x(z+1), d_g}(g) = S_{y(z)+1, d_g}(g) = S_{x(z), d_g}(g)-S_{x(z), y(z)}(g)
\stackrel{\text{by Eq.~\eqref{eq: xyvalid}}}{>} 
\sum_{h=z+1}^{|I_2|}{f'(a_{\kappa(h)})}.
\]
Thus, both $\mathcal{X}(z+1)$ and $\mathcal{Y}(z+1)$ hold.
By mathematical induction, $\mathcal{X}(j)$ and $\mathcal{Y}(j)$ hold 
for any $j \in [1, |I_2|]$.

\subsection{Properties of $g_2$}
Clearly, $g_2$ is dominated by $g$. 
Because we have $g_2(b_k) > 0$ only when $k \in [x(j), y(j)]$ 
for some $j \in [1, |I_2|]$,
we then have
\begin{equation}\label{eq: m2B}
S(g_2^2)
\leq
\sum_{j=1}^{|I_2|}{\left(\sum_{k \in [x(j), y(j)]}{g_2(b_k)}\right)^2} 
\stackrel{\text{by Eq.~\eqref{eq: m2setting}}}{=}
\sum_{j=1}^{|I_2|}{S_{x(j), y(j)}(g)^2} 
\stackrel{\text{by Lemma~\ref{cl:xyleqf}}}{<}
\sum_{i \in I_2}{f'(i)^2}. 
\end{equation}
In addition, by Eq.~\eqref{eq: xim2}, $[x(j), y(j)] \cap [x(j'), y(j')] = \varnothing$ 
if $j \neq j'$.
We then have
\begin{equation}\label{eq: m2A}
S(g_2) = \sum_{j=1}^{|I_2|}{\sum_{k \in [x(j), y(j)]}{g_2(b_k)}} 
 \stackrel{\text{by Eq.~\eqref{eq: m2setting} and ~\eqref{eq: yim2}}}{\geq} 
\sum_{j=1}^{|I_2|}{\frac{f'(a_{\kappa(j)})}{2}}
= \frac{1}{2}\sum_{i \in I_2}{f'(i)}. 
\end{equation}

\subsection{Construction of $g'$ Based on $g_1$ and $g_2$}
If $\sum_{i \in I_1}{f'(i)} \geq \sum_{i \in I_2}{f'(i)}$, we set $g' = g_1$.
Otherwise, we set $g'=g_2$. 
Because both $g_1$ and $g_2$ are dominated by $g$, $g'$ is dominated by $g$.
Observe that $S(f') = \sum_{i \in I_1}{f'(i)} + \sum_{i \in I_2}{f'(i)}$.
Thus, if $g' = g_1$, by Eq.~\eqref{eq: m1A}, we have 
\[
S(g') = S(g_1) = \sum_{i \in I_1}{f'(i)} \geq \frac{1}{2}S(f').
\] 
By Eq.~\eqref{eq: m1B},
we have 
\[
S(g'^2) = S(g_1^2) = \sum_{i \in I_1}{f'(i)^2} \leq S(f'^2).
\]
Finally, if $g' = g_2$, by Eq.~\eqref{eq: m2A}, we have 
\[
S(g') = S(g_2) \geq \frac{1}{2}\sum_{i \in I_2}{f'(i)} \geq \frac{1}{4}S(f').
\]
By Eq.~\eqref{eq: m2B}, we have 
\[
S(g'^2) = S(g^2_2) < \sum_{i \in I_2}{f'(i)^2} \leq S(f'^2).
\]
\end{appendix}

\end{document}